\newtheorem{theorem}{Theorem}
\newtheorem{lemma}{Lemma}
\newtheorem{example}{Example}
\newtheorem{definition}{Definition}
\newcommand{\el}{\ensuremath{\mathcal{EL}}\xspace}
\newcommand{\flo}{\ensuremath{\mathcal{FL}_0}\xspace}
\newcommand{\wrt}{w.r.t.\ }
\newcommand{\ie}{i.e.\ }
\newcommand{\var}{\ensuremath{\mathbf{Var}}\xspace}
\newcommand{\roles}{\ensuremath{\mathbf{R}}\xspace}
\newcommand{\names}{\ensuremath{\mathbf{N}}\xspace}
\newcommand{\const}{\ensuremath{\mathbf{C}}\xspace}
\renewcommand{\S}{\ensuremath{\mathcal{S}}}
\date{\today}
\author{Barbara Morawska, Dariusz Marzec, Sławomir Kost, Michał Henne\\
	\texttt{\small\{barbara.morawska,  dariusz.marzec, slawomir.kost, michal.henne\}@uni.opole.pl}
	\thanks{\small	
			\begin{wrapfigure}[4]{r}{0.15\textwidth}
				\vspace{-5mm}
				\hspace{-3mm}
				\includegraphics[width=2cm]{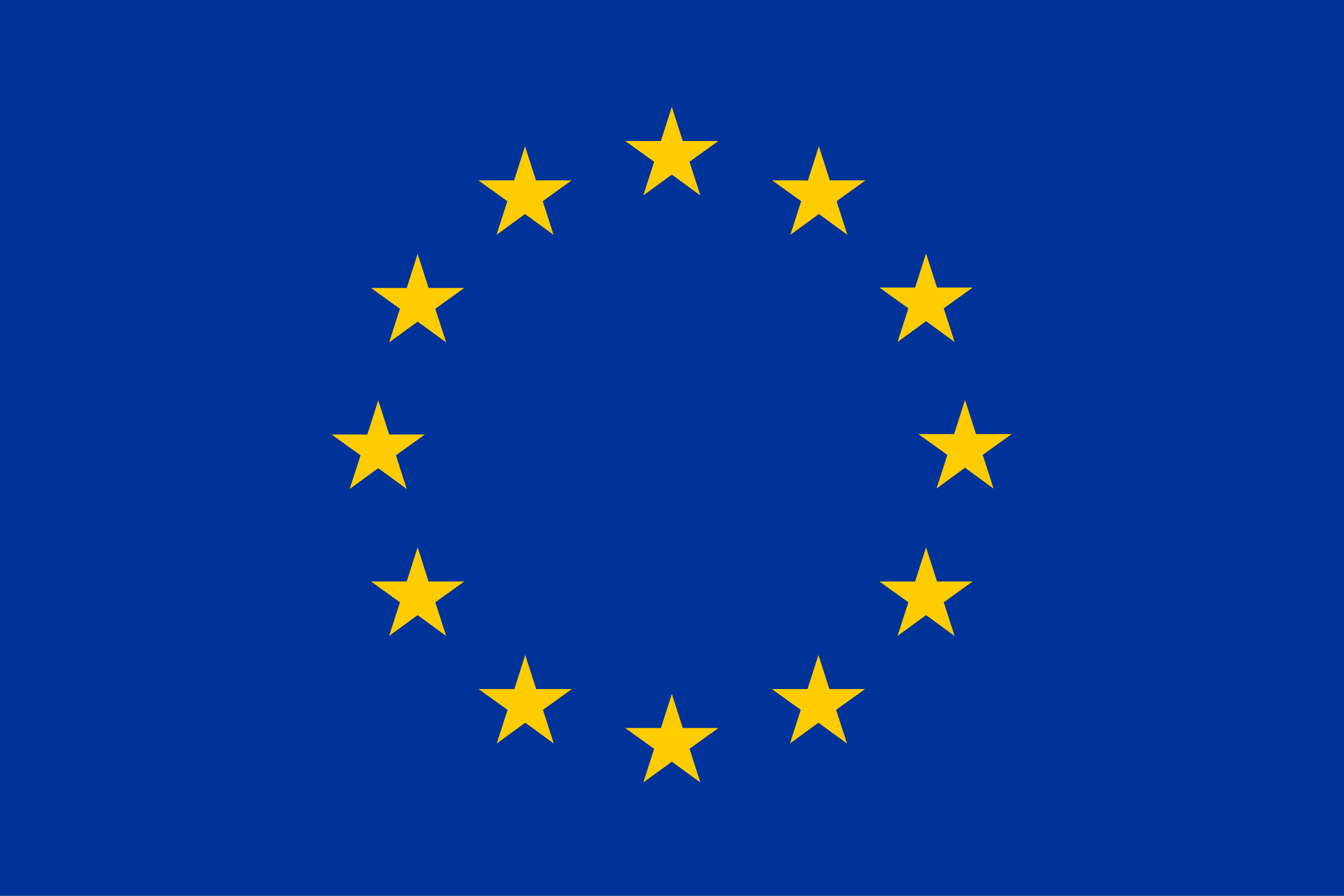}
			\end{wrapfigure}
			This research is part of the project No 2022/47/P/ST6/03196 within the POLONEZ BIS
			programme co-funded by the National Science Centre and the European Union’s Horizon 2020
			research and innovation programme under the Marie Skłodowska-Curie grant agreement
			No. 945339. For the purpose of Open Access, the author has applied a CC-BY public copyright
			licence to any Author Accepted Manuscript (AAM) version arising from this submission.		
			}}
\title{FILO -- automated unification in \flo\\
 (extended version)}
\begin{document}
	\maketitle
	\begin{abstract}
	FILO is a java application that decides unifiability for a unification problem formulated in 
	the description logic \flo. If the problem is unifiable, it presents a user with
	an example of a solution. FILO joins a family of similar applications like UEL\cite{Baader2012} solving unification problems in
	the description logic \el, \flo{wer} \cite{Baader2022}, a subsumption decider for \flo with TBox, CEL \cite{Baader2006} and JCEL \cite{Mendez2012} subsumption deciders
	for \el with TBox, and others. These systems play an important role in various knowledge representation reasoning problems.
	
\end{abstract}
	
	\section{Introduction}
	Description Logics (DLs) are a formalism used to represent knowledge in a given domain.
	Usually, this knowledge is stored in the form of an ontology, a large set of definitions of concepts.
	DLs describe, in a formal way, how complex concepts can be constructed from simpler ones, \ie
	concept names and role names (binary relations). DLs differ between themselves by providing different sets of constructors that can be used to construct complex concepts.
	The description logic \flo is a member of a family of small description logics with restricted expressive power. It provides only \emph{intersection (conjunction)} constructor, $\top$ constructor, and value restriction of the form
	$\forall r.C$, where $C$ is a concept and $r$ is a role name. A value restriction for a role $r$ expresses the restriction on the elements related to a given element with the role $r$ to the objects belonging to a concept $C$.
		
	For example in the notation of \flo we can define the following two concepts:
	
	\begin{itemize}
		\item  \texttt{Student $\sqcap\,\, \forall$attends.(PGCourse $\sqcap\,\, \forall$given\_by.(Professor $\sqcap \forall$belonging\_to. CSFaculty)) },
		\item  \texttt{PostGraduateStudent $\sqcap\,\, \forall$attends.CSCourse}.
	\end{itemize}
	
	These concepts are not equal. Unification asks for a substitution, or a definition of some of the component concept names, which
	may make the concepts equivalent. For example if we treat \texttt{PostGraduateStudent} and \texttt{CSCourse} as \emph{variables},
	then we can say that these two concepts are equivalent under the definitions:
	
	\begin{itemize}
		\item \texttt{PostGraduateStudent $:=$ Student $\sqcap$ $\forall${attends}.PGCourse} 
		\item \texttt{CSCourse $:=$ PGCourse $\sqcap\, \forall$given\_by.(Professor $\sqcap \forall$belonging\_to.CSFaculty)}.
	\end{itemize}

	The unification problem in DLs was, in fact, first defined for our logic \flo. It was shown in \cite{BaNa-JSC01} to be
	an Exptime-complete problem. The algorithm presented there reduced the problem to the emptiness problem for
	a root-to-frontier automaton on trees. Since the automaton is of size exponential in the size of
	a unification problem, this showed the exponential upper bound. For some reasons, this algorithm
	was never implemented. One of them was prehaps that the focus of researchers turned to another
	small DL, namelly to \el. The small description logic \el does not have a value restriction constructor,
	but provides instead an existential restriction: $\exists r.C$, which expresses an existential requirement
	that an object be related by a relation $r$ to another object which satisfies the concept $C$.
	Unification of concepts formulated in \el was solved in \cite{Baader2010} and 
	shown to be an NP-complete problem. There are at least two algorithms, one a goal-directed based on inference rules\cite{Baader2012c} and the other based on a SAT-reduction \cite{Baader2012a}, that establish the upper bound.
	The algorithms for solving unification in \el were automatized as a Java application UEL, \cite{Baader2012b}.
	
	In \cite{Borgwardt2012} we have revisited the unification algorithm for \flo and 
	developed a new one, based on reduction to a problem of finding a special kind of models for 
	a set of anti-Horn clauses, which seemed a bit analogous to the SAT-reduction on the side of \el.
	In the present paper we come back to that algorithm, and revise it again, so that it can be 
	implemented. Our implementation called FILO works on unification problems in form of ontology files,
	where variables are marked with the suffix \emph{var}. Basically FILO is a decision procedure,
	its purpose is not to compute a  unifier or a set of unifiers. Nevertheless if it detects that
	an input problem is unifiable, it will output a unifier, that can be extracted from its computations.
	
	\section{The description logic \flo}
	As mentioned in the introduction, the description logic \flo deals with concepts constructed
	from a countable set of concept names (unary predicates) \names and a countable set of role names (binary predicates, binary relations) \roles.
	Concepts are generated by the following grammar.
	
	\[ C ::= A \mid C \sqcap C \mid \forall r.C \mid \top \]
	where $A$ is an arbitrary element of \names and $r$ is an element of \roles.
	
	Concept names are interpreted as subsets of a non-empty domain and role names as binary relations.
	Hence an \flo-interpretation $I$ is a pair $(\Delta^I, \cdot^I)$, where $\Delta^I$ is a non-empty domain
	and $\cdot^I$ is an interpreting function. Since we are concerned with only \flo-interpretations, we will talk about just interpretations, omitting the prefix $\flo$.
	 The interpretation for concept names ($A^I \subseteq \Delta^I$) and role names ($r^I \subseteq \Delta^I \times \Delta^I$) is extended
	to all \flo-concepts in the following way.
	
	\begin{itemize}
		\item $(C_1 \sqcap C_2)^I = C_1^I \cap C_2^I$
		\item $(\forall r.C)^I = \{e \in \Delta^I \mid \forall_{d \in \Delta^I} ((e,d) \in r^I \implies d \in C^I)\}$
		\item $\top^I = \Delta^I$
	\end{itemize}

Based on this semantics we define the subsumption relation between concepts.
$C \sqsubseteq D$ iff for every interpretation $I$, $C^I \subseteq D^I$.
The equivalence of concepts is then defined as the subsumption in both directions.
$C \equiv D$ iff $C\sqsubseteq D$ and $D \sqsubseteq C$.	
	
With respect to the equivalence, one can easily notice that the intersection of concepts is associative,
commutative and has $\top$ as its unit. The value restrictions behave as homomorphisms:
$\forall r.(C_1 \sqcap C_2) \equiv \forall r.C_1 \sqcap \forall r.C_2$.

Due to these properties each concept is equivalent to a conjunction of concepts of the form:
$\forall r_1.\forall r_2. \dots \forall r_n. A$, where $r_1, \dots, r_n$ are not necessarily different role names and $A$ is a concept name. For brevity we write $\forall v.A$, where $v$ is a word over \roles, 
and call such a concept \emph{a particle}.

Hence, each concept is equivalent to a conjunction of particles, and an empty conjunction is considered to be $\top$. We call such a concept: a conjunction of particles, a concept \emph{in normal form}, and
since conjunction is associative, commutative and idempotent, we identify it with a set.
Hence each concept in normal form is a set of particles, and the empty set is $\top$.

Below we will assume that each concept is in normal form.

We can also assume that it is reduced, \ie no one of the following rules applies to it.
\begin{enumerate}
	\item $C \sqcap \top \leadsto C$, $\top \sqcap C \leadsto C$,
	\item $\forall v.\top \leadsto \top$
\end{enumerate}

In \flo subsumption between two concepts $C$ and $D$ can be decided in polynomial time.
\begin{lemma}\label{lemma:subsumption} Let  $C=P_1 \sqcap \cdots\sqcap P_m$ and $D=P'_1 \sqcap \cdots \sqcap P'_n$.
	\begin{enumerate}
\item 	$C \sqsubseteq D $ iff for every  $1 \le i \le n$,  $P_1 \sqcap \cdots\sqcap P_m \sqsubseteq P'_i$
\item for every $1 \le i \le n$,  $P_1 \sqcap \cdots\sqcap P_m \sqsubseteq P'_i$ iff 
$P'_i \in \{P_1, \cdots, P_m\}$.
\end{enumerate}
\end{lemma}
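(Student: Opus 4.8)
=== BEGIN PROOF PROPOSAL ===

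The plan is to prove the two parts separately, since part~(1) is a general decomposition of subsumption into atomic targets, while part~(2) characterizes when a conjunction subsumes a single particle.

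For part~(1), I would argue both directions semantically. The forward direction is immediate: if $C \sqsubseteq D$, then for every interpretation $I$ we have $C^I \subseteq D^I = (P'_1)^I \cap \cdots \cap (P'_n)^I \subseteq (P'_i)^I$ for each $i$, so $C \sqsubseteq P'_i$. For the converse, suppose $C \sqsubseteq P'_i$ for every $i$; then for any interpretation $I$ and any $e \in C^I$ we have $e \in (P'_i)^I$ for all $i$, hence $e \in \bigcap_i (P'_i)^I = D^I$, giving $C \sqsubseteq D$. This part is routine and relies only on the semantics of $\sqcap$ as intersection.

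Part~(2) is the substantive claim and the main obstacle, since it turns a semantic statement into a purely syntactic membership test. The easy direction is that $P'_i \in \{P_1,\dots,P_m\}$ implies $C \sqsubseteq P'_i$: if some conjunct of $C$ is literally $P'_i$, then $C^I \subseteq (P'_i)^I$ trivially for every $I$. The hard direction is the converse: I must show that $C \sqsubseteq P'_i$ forces $P'_i$ to actually occur among the conjuncts of $C$. Writing $P'_i = \forall v.A$ for a word $v$ over \roles and a concept name $A$, the strategy is to build a single \emph{canonical} interpretation $I$ that witnesses $C$ while falsifying $P'_i$ whenever $P'_i$ is not already a conjunct of $C$. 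Concretely, I would take the domain to be the set of prefixes of the words appearing in the particles of $C$ together with $v$, let role edges connect a prefix $w$ to $wr$, designate a root $\varepsilon$, and define $B^I$ for each concept name $B$ so that the root satisfies exactly the particles of $C$ (a ``tree-shaped'' or term interpretation). One then checks that $\varepsilon \in C^I$ by construction, and that $\varepsilon \in (\forall v.A)^I$ holds precisely when the particle $\forall v.A$ is among the conjuncts of $C$; if it is not, membership of $A$ at the node reached by following $v$ can be withheld, so $\varepsilon \notin (P'_i)^I$, contradicting $C \sqsubseteq P'_i$.

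The delicate point in this construction is the definition of $B^I$ at each node so that the root satisfies \emph{all and only} the intended particles: a node reached by the word $w$ should belong to $B^I$ exactly when the particle $\forall w.B$ is a conjunct of $C$. With this definition, an induction on the length of the suffix of $v$ shows that $\varepsilon \in (\forall v.A)^I$ iff the node at $v$ lies in $A^I$ iff $\forall v.A \in \{P_1,\dots,P_m\}$, which is exactly what is needed. I expect the bookkeeping over prefixes and the inductive verification of the value-restriction semantics to be the part requiring the most care, while the overall shape of the argument follows the standard homomorphism/canonical-model technique.

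=== END PROOF PROPOSAL ===
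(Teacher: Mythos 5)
Your proof is correct, and it is considerably more self-contained than the paper's: the paper disposes of both statements in a single sentence, appealing to ``the properties of subsumption that establishes a partial order with respect to the sets of particles'' --- i.e., it treats the correspondence between subsumption and membership of particles as known folklore for \flo (going back to the original Baader--Narendran work), whereas you actually prove it. Part~(1) is the routine intersection argument in both readings. For part~(2), your canonical tree-shaped interpretation --- domain the prefixes of $v$ together with prefixes of the words of $C$'s particles, deterministic role edges $w \to wr$, and $B^I$ holding at node $w$ exactly when $\forall w.B$ is a conjunct of $C$ --- is the standard witness, and the key steps check out: $\varepsilon \in C^I$ because the unique path along each $w_j$ either leaves the finite domain (vacuous satisfaction) or ends at a node labelled $B_j$, and since every prefix of $v$ is in the domain, the path for $P'_i = \forall v.A$ is never vacuous, so $\varepsilon \in (\forall v.A)^I$ iff node $v \in A^I$ iff $\forall v.A \in \{P_1,\dots,P_m\}$. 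One minor imprecision: the root does not satisfy \emph{exactly} the particles of $C$, since it also vacuously satisfies every particle $\forall w.B$ whose path $w$ exits the domain; this is harmless here precisely because you adjoined the prefixes of $v$ (and in fact those prefixes alone would suffice as the domain --- the nodes coming from $C$'s particles are dispensable). What your route buys is a genuine first-principles proof where the paper has only a citation-style assertion; what the paper's route buys is brevity, since the partial-order characterization of \flo subsumption over particle sets is standard.
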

\begin{proof}
	Both statements follow from the properties of subsumption that establishes a partial order with respect to the sets of particles.
\end{proof}

The next lemma justifies dividing a decision of a subsumption relation between two concepts into a set of decisions about smaller subsumptions that are restricted \wrt different constants.

\begin{lemma}\label{lemma:one-constant}
	 Let  $C=P_1 \sqcap \cdots\sqcap P_m$ and $D=P'_1 \sqcap \cdots \sqcap P'_n$. 
	 Then 	$C \sqsubseteq D $ iff for every $A \in \names$,
	 $C_A \sqsubseteq D_A$, where $C_A, D_A$ are obtained from  $C$ and $D$ by deleting all particles
	 of the form $\forall v.B$, with $B \not=A$.
\end{lemma}
\begin{proof}
	Assume that $C \sqsubseteq D$.
	Let $A$ occur in $D$. Hence $D_A = \{P'_i \mid P'_i = \forall v_i.A \}$.
	By Lemma~\ref{lemma:subsumption}, $\forall v_i.A \in C$ but it is also true that
	$\forall v_i.A \in C_A$.
	If $A$ does not occur in $D$, then $D_A = \top$ and thus $C_A \sqsubseteq D_A$.
	
	Now assume that for every $A \in \names$, $C_A \sqsubseteq D_A$.
	Then by Lemma~\ref{lemma:subsumption} $\bigsqcap \{C_A \mid A \in \names\} \sqsubseteq \bigsqcap \{D_A \mid A \in \names\} $.
	And for reduced concepts, $\bigsqcap \{C_A \mid A \in \names\} = C$ and 
	$\bigsqcap \{D_A \mid A \in \names\}= D $, hence $C \sqsubseteq D$.
\end{proof}

	\section{Unification problem}
	
	In order to define a unification problem in \flo, we have to decide which of the concept names are variables and which are constants. The variables may be substituted by concepts and constants cannot be substituted. Hence we divide the set of concept names \names into two disjoint sets 
	\const which will be called \emph{constants} and \var, \emph{variables}.
	
	The unification problem is then defined by its input and output as follows.
	
	\textbf{Input:}
	$\Gamma = \{C_1 \sqsubseteq^? D_1, \dots, C_n \sqsubseteq^? D_n\}$,
	where $C_1\ldots C_n, D_1\ldots  D_n$ are \flo-concepts constructed over constants and variables.
	We call $C \sqsubseteq^? D \in \Gamma$ a \emph{goal subsumption}.

\textbf{Output:} "true'' if there is a substitution $\gamma$ such that\\
$\gamma(C_1) \sqsubseteq \gamma( D_1), \dots, \gamma(C_n) \sqsubseteq \gamma(D_n)$.
The substitution $\gamma$ is called \emph{a unifier} or \emph{a solution} of $\Gamma$.

Due to Lemma~\ref{lemma:subsumption}, we can assume that for each goal subsumption, $C \sqsubseteq^? D$, $D$ is a particle.
	
\section{Solver}	
	
FILO is an application written in Java using OWL API and Maven for a dependency management.
As for now it is a standalone application (FILO.jar). The application can be easily opened by double-clicking the file (Windows) or running \texttt{java - jar Filo.jar} in the command line (Linux).
 The compiled file is available at \href{https://unifdl.cs.uni.opole.pl/unificator-app-for-the-description-logic-fl_0}{\url{https://unifdl.cs.uni.opole.pl/unificator-app-for-the-description-logic-fl_0/ }}
  and the source files are available at the public GitHub repository \href{https://github.com/barbmor/FILO}{\url{https://github.com/barbmor/FILO}}.

The main application window (see Figure \ref{FILO:interface}) is divided into three sections. The input section allows to select predefined test from dropdown menu or choose an ontology file with an .owx or .owl extension. The output section shows the results of the unification process, including discovered unifier and/or relevant message. The options panel provides controls for setting the log level, as well as buttons to save the log file (“Save log file”) and to display diagnostic data (“Show statistics”).

\begin{figure}[H]
\centering
\includegraphics[width=0.6\textwidth]{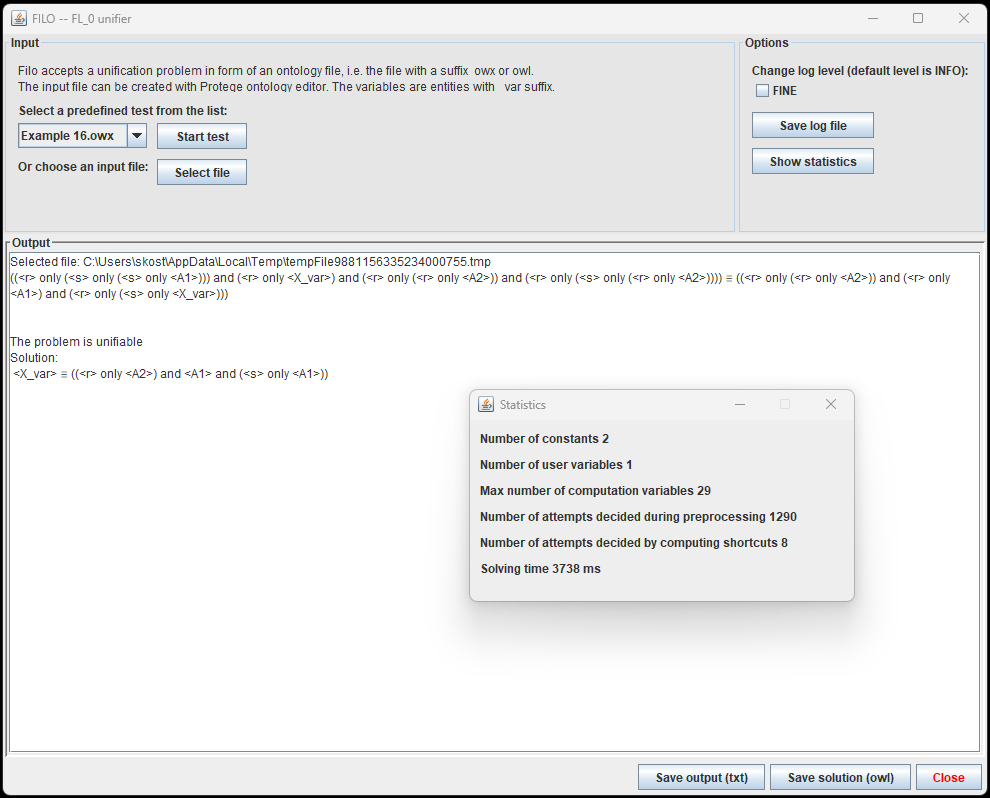}
\caption{FILO interface with statistics window.}
\label{FILO:interface}
\end{figure}

\subsection{Input}

A unification problem which can be solved by FILO must be given in form of an ontology file. Such an ontology may be created  using the ontology editor, Prot\'eg\'e\footnote{\url{https://protege.stanford.edu/}}. The concept names in such an ontology are treated by FILO
as constants, unless they have a suffix "\_var". The goal subsumptions are defined as general class axioms or concept subsumptions defined as class hierarchy statements. FILO recognizes concept names (constants and variables),
intersection of concepts and value restrictions. It will report error if an existential restriction or
any other constructor is encountered during reading the input file.


Having ready the input file with the suffix \emph{owl} or \emph{owx}, a user can open it in FILO by pressing
\emph{Select file} button and navigating to a proper folder, choosing the file in the provided file chooser.
FILO provides numerous examples, which can be examined by choosing them from a combo box on the main panel and
pressing \emph{Start test} button.

	
	\subsection{Reading input -- flattening I}
	FILO uses OWL API to read the input file and define its own internal representation of the problem in form of a filo-model. Filo-model is a set of equivalences, subsumptions and definitions i a flat normal form.
	The flat here means that each value restriction is distributed over the conjuncts of its argument, and a  value restriction nested below the top level of a concept intersection is abstracted by an introduction of a system variable and a new definition (equivalence).
	
	\begin{example}\label{example:flatteningI}
		Let an input subsumption in the notation of \flo be:\\
		$\forall r(\forall r(A \sqcap B) \sqcap C) \sqcap \forall s.(A \sqcap C) \sqsubseteq A \sqcap \forall s.(\forall r(A \sqcap C) \sqcap B)$.
		
		After the fattening I stage we get the following set of subsumptions and equivalences:\\
		$\begin{array}{l}
		\forall r.Var3 \sqcap \forall r.C \sqcap \forall r.Var0 \sqcap \forall r.Var1 \sqcap \forall r.Var2 \sqsubseteq \forall s.B \sqcap A \sqcap \forall s.Var4 \sqcap \forall s.Var5,\\
		Var3 \equiv \forall s. A,\\
		Var1\equiv\forall r. B,\\
		Var4\equiv\forall r.A,\\		
		Var2\equiv \forall s.C,\\
		Var5 \equiv\forall r.C,\\
		Var0 \equiv \forall r.A.
		\end{array}$
	\end{example}

The variables: $Var0, Var1, Var2, Var3, Var4, Var5$ are \emph{system variables}, used to abstract subconcepts from the original subsumption.
The concepts in the problem after the flattening I are in normal form.
The following lemma follows from the properties of subsumption in \flo.
\begin{lemma} \label{lemma:flatteningone}
	Let $\Gamma$ be a \flo unification problem and $\Gamma'$ is an \flo unification problem obtained from
	$\Gamma$ by flattening I. Then $\Gamma$ is unifiable iff $\Gamma'$ is unifiable.
	\end{lemma}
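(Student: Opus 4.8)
The plan is to reduce flattening I to a finite sequence of elementary rewriting steps and to argue that each step preserves unifiability; the claim then follows by transitivity along the sequence. Two kinds of steps occur. A \emph{distribution} step replaces an occurrence of $\forall r.(C_1 \sqcap C_2)$ by $\forall r.C_1 \sqcap \forall r.C_2$ inside some goal subsumption. An \emph{abstraction} step picks a value restriction $D_0$ nested below the top level of some concept intersection, replaces that occurrence by a fresh system variable $X$, and adds the equivalence $X \equiv D_0$ to the problem. Since flattening I is exactly a terminating application of these two steps, it suffices to prove that $\Gamma$ is unifiable iff the problem $\Gamma_1$ obtained from $\Gamma$ by one such step is unifiable, and then chain the equivalences.

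The distribution step is the easy case. The homomorphism law $\forall r.(C_1 \sqcap C_2) \equiv \forall r.C_1 \sqcap \forall r.C_2$ recorded earlier in the excerpt shows that the step replaces a concept by an equivalent one. Because the \flo semantics interprets $\sqcap$ and $\forall r$ purely in terms of the interpretations of their arguments, equivalence is a congruence: replacing a subconcept by an equivalent subconcept yields an equivalent concept, and moreover $\gamma(E) \equiv \gamma(E')$ whenever $E \equiv E'$, for every substitution $\gamma$, since substitution commutes with the constructors. Consequently a substitution $\gamma$ satisfies the original goal subsumption $C \sqsubseteq^? D$ iff it satisfies the rewritten one, so $\Gamma$ and $\Gamma_1$ have literally the same unifiers.

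The abstraction step is the main point and where the argument must be done with care. For the forward direction, given a unifier $\gamma$ of $\Gamma$ I would extend it to $\gamma_1$ by setting $\gamma_1(X) := \gamma(D_0)$ and leaving $\gamma_1$ equal to $\gamma$ on all other variables; this is well defined because $X$ is fresh and hence does not occur in $D_0$, so $\gamma_1(D_0) = \gamma(D_0) = \gamma_1(X)$ and the new equivalence $X \equiv D_0$ is satisfied. Applying $\gamma_1$ to any goal subsumption of $\Gamma_1$ places $\gamma(D_0)$ in exactly the position where $\gamma$ had placed it in the corresponding goal subsumption of $\Gamma$, so by the congruence property the two images are equivalent and the subsumption still holds. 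For the backward direction, given a unifier $\gamma_1$ of $\Gamma_1$ I would take $\gamma$ to be its restriction to the variables of $\Gamma$. The equivalence $X \equiv D_0$ forces $\gamma_1(X) \equiv \gamma_1(D_0) = \gamma(D_0)$, so in every goal subsumption of $\Gamma_1$ we may replace $\gamma_1(X)$ by the equivalent $\gamma(D_0)$ without affecting which subsumptions hold; this back-substitution turns the images under $\gamma_1$ into the images under $\gamma$ of the original goal subsumptions, whence $\gamma$ unifies $\Gamma$.

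The expected obstacle is purely the bookkeeping around abstraction rather than any single inequality. I would need the system variables to be genuinely fresh and mutually independent, so that when flattening I performs several (possibly nested) abstractions the defining equivalences can be discharged one at a time and the induction on the number of steps is legitimate; I would therefore fix a bottom-up (innermost-first) order on the abstracted occurrences so that at each step the subconcept $D_0$ being abstracted contains no not-yet-introduced system variable. A second point to verify is that the normal-form and reducedness assumptions, together with the reduction (via Lemma~\ref{lemma:subsumption}) to goal subsumptions whose right-hand side is a particle, are stable under both step types; this is immediate for distribution and requires only noting that abstraction introduces no constructor beyond a fresh concept name, so the output of flattening I is again a legitimate \flo unification problem.
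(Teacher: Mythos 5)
Your proof is correct, but it is worth knowing that the paper does not actually prove this lemma at all: it is stated with only the one\hyp{}sentence remark that it ``follows from the properties of subsumption in \flo.'' So your argument is not a variant of the paper's proof --- it is the missing proof. Your decomposition of flattening~I into distribution steps (justified by the homomorphism law $\forall r.(C_1 \sqcap C_2) \equiv \forall r.C_1 \sqcap \forall r.C_2$ plus congruence of equivalence under the constructors and under substitution) and abstraction steps (justified by extending a unifier with $\gamma_1(X) := \gamma(D_0)$ in one direction and restricting it in the other, using $\gamma_1(X) \equiv \gamma_1(D_0)$ from the added definition) is exactly the standard argument, and it parallels the technique the paper itself uses later for flattening~II, where completeness (Lemma~\ref{lemma:fltwocompleteness}) extends a unifier to the freshly introduced decomposition variables and soundness (Lemma~\ref{lemma:fltwosoundness}) runs an induction on the number of rewriting steps. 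Two small remarks. First, your insistence on an innermost-first order of abstractions is more caution than is needed: the chaining argument goes through for any order of steps, since all that matters is that each system variable is fresh with respect to the \emph{current} problem at the moment it is introduced; your own induction already only uses this. Second, your closing observation that abstraction introduces equivalences rather than subsumptions is worth keeping, since the paper's formal definition of a unification problem uses only goal subsumptions; one either splits $X \equiv D_0$ into two subsumptions or, as the paper implicitly does with its ``filo model,'' admits equivalences as goals --- either reading makes your argument (and the lemma statement itself) well formed.
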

	
A unification problem obtained from the input by flattening I is called a \emph{Filo model}.
	
	\subsection{Main loop}

In view of Lemma~\ref{lemma:one-constant}, we can solve a unification problem for each constant separately and
then combine solutions into one using intersection constructor.

Having created a filo-model, FILO starts a loop with respect to each constant.
Hence it enters the loop: \emph{for each constnant}. If there are no constants, only variables, then
the problem has a solution sending all variables to $\top$.

FILO attempts to solve the problem for each constant separately, and if it fails for any one of them,
it breaks the main loop immediately returning \emph{failure}. The problem is not unifiable.

\subsection{Creating generic goal -- flattening II}
For a given  constant FILO creats a \emph{generic goal} for a given constant $A$.
First all subsumptions and equivalences are transformed into subsumptions with one particle on the right side,
where all concepts are in normal form.

\begin{example}
	Consider Example~\ref{example:flatteningI}. At this stage the subsumption:
	$\forall r.Var3 \sqcap \forall r.C \sqcap \forall r.Var0 \sqcap \forall r.Var1 \sqcap \forall r.Var2 \sqsubseteq \forall s.B \sqcap A \sqcap \forall s.Var4 \sqcap \forall s.Var5$ is divided into:
	
	$\begin{array}{l}
	\forall r.Var3 \sqcap \forall r.C \sqcap \forall r.Var0 \sqcap \forall r.Var1 \sqcap \forall r.Var2 \sqsubseteq \forall s.B \\
	\forall r.Var3 \sqcap \forall r.C \sqcap \forall r.Var0 \sqcap \forall r.Var1 \sqcap \forall r.Var2 \sqsubseteq  A 	\\
	\forall r.Var3 \sqcap \forall r.C \sqcap \forall r.Var0 \sqcap \forall r.Var1 \sqcap \forall r.Var2 \sqsubseteq \forall s.Var4\\
\forall r.Var3 \sqcap \forall r.C \sqcap \forall r.Var0 \sqcap \forall r.Var1 \sqcap \forall r.Var2 \sqsubseteq  \forall s.Var5\\
		\end{array}$
		
		The remaining equivalences are transformed into subsumptions:\\
		$\begin{array}{ll}
				Var3 \sqsubseteq \forall s. A, & \forall s.A \sqsubseteq Var3,\\
			Var1\sqsubseteq\forall r. B, & \forall r.B \sqsubseteq Var1,\\
			Var4\sqsubseteq\forall r.A, & \forall r.A \sqsubseteq Var4,\\		
			Var2\sqsubseteq \forall s.C, & \forall s.C \sqsubseteq Var2\\
			Var5 \sqsubseteq\forall r.C, & \forall r.C \sqsubseteq Var5,\\
			Var0 \sqsubseteq \forall r.A & \forall r.A \sqsubseteq Var0
			\end{array}$
\end{example}

At the same time all constants that are not equal to a given constant $A$ are replaced by $\top$. FILO applies reduction here.
The subsumptions at this moment are still not flat.

\begin{definition}
	A subsumption of the form $C_1 \sqcap \cdots \sqcap C_n \sqsubseteq D$, where $C_1, \dots, C_n, D$ are particles, is \emph{not flat} if at least one of the following conditions obtains:
	\begin{enumerate}
		\item $D = \forall r.D'$, for a role name $r$,
		\item $D \not=\top$ and there is $i$, $1\le i \le n$ such that $C_i = \forall r.C$, for a role name $r$,
	\end{enumerate}
	\end{definition}
In order to flatten the subsumptions we use the  rules of Figure~\ref{figure:flattening}.

	\begin{figure}
\begin{framed}
			Consider a \textbf{non flat} subsumption in a problem $\Gamma$: $s=C_1 \sqcap \cdots \sqcap C_n \sqsubseteq P$, where $C_1, \dots, C_n, P$ are particles.
			(Notice that $\top$ does not occur in $s$.)
			\begin{enumerate}
				\item\label{rule:-r} If $P$  of the form $\forall r.P'$,
				replace $s$ with $s^{-r}$ in $\Gamma$.
			
				
				\item\label{rule:A} If $P = A$, replace $s$ with $s^A$.
				
				\item\label{rule:fulldecomposition} If $P = X$. (There is $C_i$ of the form $\forall r.C'_i$ or $C_i$ is the constant $A$)
			Remove $s$ from $\Gamma$ and:
					\begin{enumerate}
						\item\label{rule-s-r} for each $r \in \roles$, add $s^{-r}$
						\item\label{rule:constantdecomposition} and add $C_1^{A} \sqcap \cdots \sqcap C_n^{A} \sqsubseteq^? X_A$ to $\Gamma$.
					\end{enumerate}	
			\end{enumerate}						
\end{framed}
\caption{Flattening of $\Gamma$}	
\label{figure:flattening}
\end{figure}


The notation used in Figure~\ref{figure:flattening} is explained as follows.
If $s = C_1 \sqcap \cdots \sqcap C_n \sqsubseteq P$, then $s^{-r} = C_1^{-r} \sqcap \cdots \sqcap C_n^{-r} \sqsubseteq P^{-r}$.

Now for a particle $E$, we define $E^{-r}$ as follows.

	$
E^{-r} = \begin{cases}
	E^{r} & \text{ if } E \text{ is a variable and }
	 E^r \text{ its \emph{decomposition variable} }\\
	E' & \text{ if } E = \forall r.E' \\
	\top & \text{ in all other cases } 
\end{cases}
$

In a similar way we define $s^A = C_1^A \sqcap \cdots \sqcap C_n^A \sqsubseteq P^A$, where each particle in this subsumption is defined as follows. For a particle $E$:

	$
E^{A} = \begin{cases}
	E & \text{ if } E \text{ is $A$ or a variable}\\
	\top & \text{ in all other cases } 
\end{cases}
$

The variable $X_A$ that occurs in the flattening rule~\ref{rule:constantdecomposition} is a
new variable defined for $X$, called a \emph{constant decomposition variable}.
Hence rules~\ref{rule:-r}, and \ref{rule:fulldecomposition} may introduce new variables:
decomposition variables and constant decomposition variables.
The intended meaning of these variables is as follows. Consider $\gamma$ a solution of a problem $\Gamma$.

\begin{enumerate}
	\item\label{property:decomposition} For each decomposition variable we want $\gamma(X^r)$ to be equal to $\{P \mid \forall r.P \in \gamma(X)\}$. If the set
	$ \{P \mid \forall r.P \in \gamma(X)\}$ is empty, then $\gamma(X^r)$ should be $\top$.
	Let us notice here that there is at most one decomposition variable for a given variable $X$ and a role name $r$. This means that if one was already created, we do not create another one, but reuse the old one.
	\item\label{property:constantDecomposition} In a similar way we treat a constant decomposition variable $X_A$. $\gamma(X_A)$ is either $\top$,
	or it is $A$. $\gamma(X_A) = A$ if and only if $A \in \gamma(X)$.
\end{enumerate}

We partly enforce the property~\ref{property:decomposition} by introducing additional subsumptions to the goal.
These are the so called  \emph{increasing subsumptions} of the form $X \sqsubseteq \forall r.X^r$ (at most one for each $X$ and $r$). These subsumptions are not subject for flattening. They are kept separately from the usual not flat subsumptions.
A solution $\gamma$ for an increasing subsumption $X \sqsubseteq \forall r.X^r$  ensures that 	$\gamma(X^r) \subseteq \{P \mid \forall r.P \in \gamma(X)\}$. The other direction of this subset relation is ensured by other means.
The other direction $\{P \mid \forall r.P \in \gamma(X)\} \subseteq \gamma(X^r)$ is called a \emph{decreasing rule} and cannot be expressed as a goal subsumption.

The property~\ref{property:constantDecomposition} is also enforced by other means in the next steps of the algorithm.

After applying the flattening rules of Figure~\ref{figure:flattening} exhaustively, we obtain a generic goal
$\Gamma_A$ that contains only:
\begin{itemize}
\item \emph{flat subsumptions} of the form $Y_1 \sqcap \cdots \sqcap Y_n \sqsubseteq^? X$, where all particles are variables or a constant $A$ and 
\item \emph{increasing subsumptions} of the form $X \sqsubseteq^? \forall r.X^r$, where $X^r$ is a \emph{decomposition variable} and $X$ is a \emph{parent} variable for $X^r$.
\end{itemize}

%

The point of creating a generic goal with respect to a given constant is that at this point 
all necessary variables are created and we do not need to create them any more.
Some of these variables are redundant. They will be \emph{deleted} by substituting them with $\top$.
Another alternative would be to create a goal directly by guessing the values of variables \emph{on the way}. But we have found it difficult to backtrack on such guesses. 

\begin{example}\label{example:genericgoal}
	Let our problem be:\\
	$\Gamma = \{ X \sqsubseteq^? \forall r.A,\, Y \sqcap \forall r.X \sqsubseteq^? X, \, X \sqsubseteq^? \forall r.Y\}$.
	
	\noindent
	Then flattening II yields the generic goal \wrt the constant $A$:\\
flat subsumptions: $ \{ X^r \sqsubseteq^? A,\, Y^r \sqcap X \sqsubseteq^? X^r,\,  Y  \sqsubseteq^? X_A,\, X^r \sqsubseteq^? Y  \}$,\\
increasing subsumptions: $\{ X \sqsubseteq^? \forall r.X^r, Y \sqsubseteq^? \forall r.Y^r\}$\\
	where $X^r, Y^r$ are decomposition variables and $X_A$ is a constant decomposition variable.
	\end{example}

Here we prove that the procedure of flattening II is correct.

\begin{lemma}(termination)  \label{lemma:fltwotermination}
	Let $\Gamma$ be a goal and $A$ a constant, an input for flattening II. Then the process of flattening II terminates
	with $\Gamma_A$, a generic goal for a constant $A$.
	\end{lemma}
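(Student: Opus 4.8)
The plan is to exhibit a well-founded measure on problems that strictly decreases with every application of a flattening rule, so that no infinite sequence of rule applications is possible. For a particle $\forall v.B$ (with $v$ a word over \roles and $B$ a concept name) let its \emph{role depth} be $|v|$, so that bare variables and the constant $A$ have depth $0$. For a subsumption $s = C_1 \sqcap \cdots \sqcap C_n \sqsubseteq P$ let $\mu(s)$ be the maximum role depth occurring among $C_1,\dots,C_n,P$. Reading \roles as the finite set of role names occurring in the problem (so that each rule has finite fan-out), I associate to the current state of flattening the finite multiset $M = \{\!\{\, \mu(s) \mid s \text{ a non-flat subsumption}\,\}\!\}$ of natural numbers, ordered by the Dershowitz--Manna multiset ordering $\prec$. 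Since $(\mathbb{N},<)$ is well-founded, so is $\prec$. The increasing subsumptions $X \sqsubseteq \forall r.X^r$ are stored separately and are never flattened, hence never enter $M$.

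Two observations drive the argument. First, $\cdot^{-r}$ strictly lowers role depth: a particle $\forall r.E'$ of depth $k$ becomes $E'$ of depth $k-1$, a variable becomes a depth-$0$ decomposition variable, and every remaining particle becomes $\top$ and is reduced away; hence $\mu(s^{-r}) \le \mu(s)-1$ whenever $\mu(s)\ge 1$. Second, $\cdot^A$ sends every particle other than $A$ and variables to $\top$, so $\mu(s^A)=0$ and $s^A$ is flat. I also note that any \emph{non-flat} subsumption has $\mu(s)\ge 1$: if its right-hand side is $\forall r.P'$ this is immediate, and otherwise non-flatness forces some left particle to be of the form $\forall r.C$, again of depth at least $1$.

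With these in hand I check the three rules. Rule~\ref{rule:-r} replaces a non-flat $s$ (so $\mu(s)\ge 1$) by $s^{-r}$ with $\mu(s^{-r})\le \mu(s)-1$. Rule~\ref{rule:A} replaces $s$ by the flat $s^A$, whose measure leaves $M$ altogether. Rule~\ref{rule:fulldecomposition} removes $s$ --- non-flat with a variable on the right, hence $\mu(s)\ge 1$ --- and inserts the subsumptions $s^{-r}$, each with $\mu\le \mu(s)-1$, together with $C_1^A \sqcap \cdots \sqcap C_n^A \sqsubseteq X_A$, which is flat and so does not enter $M$. In every case one element of value $\mu(s)$ is deleted from $M$ and only elements strictly below it are inserted, so $M$ strictly decreases in $\prec$; because $\Gamma$ is a set, any coincidence of a freshly produced subsumption with an existing one merely merges them, which can only shrink $M$ further. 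Well-foundedness of $\prec$ therefore yields termination.

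Finally, termination delivers a generic goal. The three rules are exhaustive over non-flat subsumptions: after the replacement of every constant different from $A$ by $\top$, the right-hand side $P$ of any subsumption is $\top$ (then $s$ is already flat), the constant $A$ (Rule~\ref{rule:A}), a variable (Rule~\ref{rule:fulldecomposition}), or $\forall r.P'$ (Rule~\ref{rule:-r}). Hence when no rule applies, every flattenable subsumption is flat; discarding the trivial ones with right-hand side $\top$, these have the announced form $Y_1 \sqcap \cdots \sqcap Y_n \sqsubseteq X$ with all particles variables or the constant $A$, and the only other subsumptions present are the separately stored increasing ones $X \sqsubseteq \forall r.X^r$, which is exactly the generic goal $\Gamma_A$. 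I expect the main obstacle to be Rule~\ref{rule:fulldecomposition}, which fans a single subsumption out into many: the multiset ordering together with the observation that a non-flat subsumption always has $\mu\ge 1$ is precisely what turns this apparent growth into a decrease. Reusing decomposition variables keeps the overall variable supply finite, but termination itself already follows from the measure alone.
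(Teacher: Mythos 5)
Your proof is correct, and while it belongs to the same family as the paper's argument (a syntactic measure strictly decreased by every rule application), your measure is genuinely different and noticeably more careful. The paper counts the ``offending'' particles --- occurrences of value restrictions $\forall r.P$ and of the constant $A$ --- observes that no rule creates new ones, and claims each application removes a subsumption or strictly lowers this count, giving termination within a linear number of steps. Read literally, that count has two rough edges: under $\cdot^{-r}$ a nested particle such as $\forall r.\forall s.B$ becomes $\forall s.B$, so the number of value-restriction particles need not drop (only the role depth does, unless one counts all $\forall$-subterm occurrences rather than maximal particles), and Rule~\ref{rule:fulldecomposition} replaces one subsumption by $|\roles|$-many, so a per-subsumption count does not immediately yield a decrease. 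Your Dershowitz--Manna multiset of maximal role depths over the non-flat subsumptions absorbs both points cleanly: the observations that non-flatness forces $\mu(s)\ge 1$ and that $\mu(s^{-r})\le\mu(s)-1$ turn the fan-out of Rule~\ref{rule:fulldecomposition} into a strict multiset decrease, exactly as you flag. You are also more explicit than the paper about the finiteness of the role set used in the fan-out, the merging of duplicate subsumptions in a set-based goal, and the exclusion of the separately stored increasing subsumptions from the measure. What the paper's cruder count buys, once repaired (e.g.\ by counting $\forall$-subterm occurrences inside non-flat subsumptions), is an explicit bound on the number of rule applications, which your well-foundedness argument does not state but could easily extract; your closing case analysis on the shape of the right-hand side, showing the terminal system is precisely the generic goal $\Gamma_A$, makes exhaustiveness explicit where the paper leaves it implicit.
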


\begin{proof}
	Let us assume that all constants different than $A$ are replaced in $\Gamma$ with $\top$.
	The lemma is a simple consequence of the following observations. The flattening rules are triggered by
	particles of the form $\forall r.P$ or  the constant $A$, occurring in the not flat subsumptions.
	There are only finitely many of such particles in $\Gamma$. 
	Each of the rules either removes a not flat subsumptions from $\Gamma$, or replaces such a subsumption
	with a new one with a strictly smaller number of the offending particles.
	
	No rule introduces any particles of the form  $\forall r.P$ or $B$ where $B$ is a constant, hence the process has to terminate, after taking $n$ steps, where $n$ is the number smaller or equal to
	the number of the particles of the form  $\forall r.P$ or $B$ where $B$ is a constant, in the original goal.	
\end{proof}

Next we show that flattening II preserves unifiability.

\begin{lemma}(completeness)  \label{lemma:fltwocompleteness}
	Let $\Gamma$ be a unification problem and $\gamma$ its solution.
	Let $\Gamma_A$ be a unification problem obtained from $\Gamma$ by flattening II with respect to a constant $A$.
	Then there is a substitution $\gamma'$, which is an extension of $\gamma$ for new variables,
	obeying the decreasing rule, such that $\gamma'$ is a solution for $\Gamma_A$.
\end{lemma}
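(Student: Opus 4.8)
The plan is to construct the witness $\gamma'$ explicitly from $\gamma$ and then verify, by induction along the flattening process, that it solves $\Gamma_A$. Since flattening II first replaces every constant $B \neq A$ by $\top$, on the original variables I would let $\gamma'$ agree with the $A$-relevant part of $\gamma$, namely $\gamma'(X) := \gamma(X)_A$ in the notation of Lemma~\ref{lemma:one-constant} (keeping only the particles $\forall v.A$); passing to this part loses nothing that $\Gamma_A$, which mentions only $A$, can constrain, and it is in fact forced, since flat subsumptions such as $\top \sqsubseteq^? X$ survive in $\Gamma_A$. On the new variables I define $\gamma'$ by the intended semantics: for a decomposition variable $X^r$ set $\gamma'(X^r) := \{P \mid \forall r.P \in \gamma'(X)\}$ (read as $\top$ when empty), and for a constant decomposition variable $X_A$ set $\gamma'(X_A) := A$ if $A \in \gamma'(X)$ and $\top$ otherwise. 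This definition is recursive in the parent variable, but well-founded: by Lemma~\ref{lemma:fltwotermination} only finitely many new variables are ever created, and the role-word length of $X^r$ strictly exceeds that of its parent, so I would induct on that length; note also that $\gamma'(X^r)$ again uses only $A$, keeping the construction inside the $A$-fragment.

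With $\gamma'$ fixed, the increasing subsumptions and the decreasing rule are immediate: by construction $\gamma'(X^r)$ equals $\{P \mid \forall r.P \in \gamma'(X)\}$ exactly, so both $\gamma'(X^r) \subseteq \{P \mid \forall r.P \in \gamma'(X)\}$, i.e.\ $\gamma'(X) \sqsubseteq \forall r.\gamma'(X^r)$, and its converse (the decreasing rule) hold. The technical heart I would isolate as two particle-level identities, proved by the case split in the definitions of $E^{-r}$ and $E^A$ and then lifted additively to conjunctions: first, $\gamma'(C^{-r}) \equiv \{P \mid \forall r.P \in \gamma'(C)\}$, and second, $A \in \gamma'(C)$ iff $A \in \gamma'(C^A)$. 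Alongside these I would record the reconciling fact that applying $\gamma'$ after replacing every non-$A$ constant by $\top$ deletes exactly the particles that restricting $\gamma(C)$ to its $A$-part deletes, so the two concepts are equivalent; this is what links the syntactic constant replacement to Lemma~\ref{lemma:one-constant}.

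The verification then splits into a base case and an inductive step. For the base case I would show that after the preliminary transformations — splitting each equivalence into two subsumptions, splitting a conjunctive right-hand side by Lemma~\ref{lemma:subsumption}, replacing non-$A$ constants, and reducing — every resulting subsumption $C' \sqsubseteq^? D'$ is already solved by $\gamma'$: indeed $\gamma(C) \sqsubseteq \gamma(D)$ gives $\gamma(C)_A \sqsubseteq \gamma(D)_A$ by Lemma~\ref{lemma:one-constant}, and by the reconciling fact this is exactly $\gamma'(C') \sqsubseteq \gamma'(D')$. For the inductive step I would show that each rule of Figure~\ref{figure:flattening} carries a subsumption solved by $\gamma'$ to subsumptions solved by $\gamma'$. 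Using that, for concepts in normal form, $C \sqsubseteq D$ is reverse inclusion of particle sets (Lemma~\ref{lemma:subsumption}), rule~\ref{rule:-r} and rule~\ref{rule:A} follow by restricting the inclusion to the particles beginning with $\forall r$, respectively to the bare particle $A$, via the first and second identities; rule~\ref{rule:fulldecomposition} combines both, obtaining each $s^{-r}$ from the $\forall r$-restricted inclusion and the subsumption $C_1^A \sqcap \cdots \sqcap C_n^A \sqsubseteq^? X_A$ from the observation that $A \in \gamma'(X)$ forces $A \in \gamma'(\bigsqcap_i C_i)$ and hence, by the second identity, $A \in \gamma'(\bigsqcap_i C_i^A)$. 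Since by Lemma~\ref{lemma:fltwotermination} only finitely many rule applications occur, induction on their number yields that $\gamma'$ solves all flat subsumptions of $\Gamma_A$, and together with the increasing subsumptions this proves the lemma.

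The step I expect to be the main obstacle is the full-decomposition rule~\ref{rule:fulldecomposition} together with the constant-replacement bookkeeping: one must simultaneously track the $\forall r$-successor sets across all roles and the top-level occurrence of $A$, and check that the $A$-restricted $\gamma'$ still meets the fresh constant decomposition variable $X_A$ — precisely where the reconciling fact and the second particle-level identity must interlock. Well-foundedness of $\gamma'$ on iterated decomposition variables is a secondary point settled by the role-word induction.
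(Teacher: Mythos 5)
Your proof is correct, and its inductive core --- a case analysis of the three rules of Figure~\ref{figure:flattening}, with each new variable receiving its intended value ($X^r \mapsto \{P \mid \forall r.P \in \gamma'(X)\}$, and $X_A \mapsto A$ or $\top$ according to whether $A \in \gamma'(X)$) --- is the same argument the paper gives; the paper merely performs these extensions incrementally, one rule application at a time, rather than defining $\gamma'$ up front by your well-founded role-word recursion, and its ``inspection of the rules'' is exactly your two particle-level identities left implicit. Where you genuinely depart from the paper is the preprocessing stage. The paper's proof begins directly at the rules of Figure~\ref{figure:flattening} and never addresses the step in which every constant $B \neq A$ is replaced by $\top$ (nor the splitting of equivalences and of conjunctive right-hand sides); your base case, via Lemma~\ref{lemma:one-constant} and your ``reconciling fact'', covers precisely this. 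In doing so you set $\gamma'(X) := \gamma(X)_A$ on the original variables, so your $\gamma'$ is not literally an ``extension of $\gamma$'' as the lemma's wording demands --- but you are right that this deviation is forced: if $B \sqsubseteq^? X$ is in $\Gamma$ with $B \neq A$ and $\gamma(X) = B$, then $\Gamma_A$ contains $\top \sqsubseteq^? X$, which no extension of $\gamma$ can solve. The paper's statement and proof are accurate only under the implicit convention (made explicit in its soundness Lemma~\ref{lemma:fltwosoundness}) that the constant replacement has already been applied to $\Gamma$ before the lemma is invoked; your version repairs the statement so that it holds for flattening II as actually described, at the cost of weakening ``extension of $\gamma$'' to ``extension of the $A$-part of $\gamma$''. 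Downstream both arguments deliver the same thing, since only the $A$-fragment of a unifier is ever relevant to the generic goal $\Gamma_A$, and the combination step after Lemma~\ref{lemma:fltwosoundness} takes unions over all constants anyway.
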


\begin{proof}
	Consider rules from Figure~\ref{figure:flattening}. If no rule is applicable to any subsumption in $\Gamma$,
	this means that all subsumptions in $\Gamma$ are flat. 
	Since all subsumptions are flat, they are of the form:
	$Y_1 \sqcap \cdots \sqcap Y_n \sqsubseteq X$, there $Y_1, \dots, Y_n, X$ are variables or $A$.
	The set of increasing subsumptions is empty, and $\Gamma = \Gamma_A$.
	
Now, assume that a rule is applicable to a subsumption $s = C_1 \sqcap \cdots \sqcap C_n \sqsubseteq P$ in $\Gamma$.	
\begin{enumerate}
	\item If Rule~\ref{rule:-r} is applied, $P = \forall r.P'$ and we replace $s$ with $s^{-r}$ in $\Gamma$.
	The application may produce new decomposition variables and the increasing subsumptions.
	For each decomposition variable $X^r$, we extend $\gamma$ with the assignment:
	$X^r \mapsto \{P \mid  \forall r.P \in \gamma(X)\}$. Let $\gamma'$ be such an extension of $\gamma$.
	Due to the properties of subsumption in \flo, the extended $\gamma'$ unifies $s^{-r}$.
	It also unifies the increasing subsumption: $X \sqsubseteq \forall r.X^r$, and obeys the decreasing rule.
	Since it is an extension of $\gamma$ it unifies all the remaining subsumptions in the modified $\Gamma$.
	
	
	\item If Rule~\ref{rule:A} is applied, $P=A$, but the subsumption is not flat, hence there is a \emph{not flat} particle on its left side. The subsumption $s$ is replaced by $s^A$, where the non-flat particles are deleted. By the properties of the subsumption in \flo all such particles are redundant for the subsumption to hold. Hence $\gamma$ is still a unifier of the modified $\Gamma$.
	
	\item If Rule~\ref{rule:fulldecomposition} is applied, $P= X$ is a variable, but the subsumption is not
	flat, hence there is a \emph{not flat} particle on its left side.
	We remove this source of non-flattness by replacing $s$ with a set of subsumptions.
	For each $r \in \roles$, we add $s^{-r}$. This may introduce some new decomposition variables and increasing subsumptions.
	We extend $\gamma$ to these new variables as in the application of Rule~\ref{rule:-r}. This extension
	is a unifier of the new subsumptions and the increasing subsumptions too.
	Moreover, we extend $\gamma$ for the new constant decomposition variable $X_A$,
	according to its intended meaning: $\gamma(X_A) = A$ if $A \in \gamma(X)$ and $\gamma(X_A) = \top$ otherwise. Since $\gamma$ unified $s$, the extended $\gamma$ will unify
	$C_1^{A} \sqcap \cdots \sqcap C_n^{A} \sqsubseteq^? X_A$ which is added to $\Gamma$.
	
\end{enumerate}

Since flattening II terminates we can use induction on the steps already taken to be sure that
we finally obtain $\Gamma_A$ and a unifier, as required by the lemma.
\end{proof}

Finally we show soundness of flattening II.

\begin{lemma}(soundness) \label{lemma:fltwosoundness}
	Assume that all occurrences of constants in $\Gamma$ except for a given constant $A$ are replaced with $\top$.
	Let $\Gamma_{A}$ be the goal obtained from $\Gamma$ by exhaustive application of  flattening rules.
	Let $\gamma$ be a unifier of $\Gamma_A$. Then $\gamma$ is a unifier of $\Gamma$.	
\end{lemma}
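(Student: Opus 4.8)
The plan is to prove the statement directly, by a \emph{reverse} induction along the very sequence of rule applications that produced $\Gamma_A$. By Lemma~\ref{lemma:fltwotermination} flattening II terminates, so I write $\Gamma = \Gamma^{(0)} \to \Gamma^{(1)} \to \cdots \to \Gamma^{(k)} = \Gamma_A$, where each arrow is one application of a rule from Figure~\ref{figure:flattening}. Since every rule touches only the subsumption $s$ it fires on (deleting it and adding its replacements, the new increasing subsumptions, and possibly the constant-decomposition subsumption) while leaving all other subsumptions untouched, it suffices to establish one local claim: if $\gamma$ satisfies every subsumption of $\Gamma^{(i+1)}$, then $\gamma$ satisfies the removed subsumption $s \in \Gamma^{(i)}$. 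Chaining this claim from $i=k-1$ down to $i=0$ yields the lemma. Throughout I assume, as announced for the special variables, that $\gamma$ respects their intended meaning, i.e.\ it obeys the decreasing rule $\{P \mid \forall r.P \in \gamma(X)\} \subseteq \gamma(X^r)$ of property~\ref{property:decomposition} and the equivalence $\gamma(X_A)=A \iff A \in \gamma(X)$ of property~\ref{property:constantDecomposition}; these are exactly the ``other means'' that plain goal subsumptions cannot encode.

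The algebraic engine of every case is the $r$-derivative of a normal-form concept, $\partial_r C := \{P \mid \forall r.P \in C\}$, for which Lemma~\ref{lemma:subsumption} gives two identities I will use repeatedly: $C \sqsubseteq \forall r.D$ iff $\partial_r C \sqsubseteq D$, and $\partial_r(C_1 \sqcap \cdots \sqcap C_n) = \partial_r C_1 \sqcap \cdots \sqcap \partial_r C_n$. I also need two pointwise inclusions, read off the definitions of $E^{-r}$ and $E^A$: for every particle $E$ one has $\gamma(E^{-r}) \subseteq \partial_r \gamma(E)$ (the only nontrivial case $E=X$ is precisely the increasing subsumption $X \sqsubseteq \forall r.X^r$) and $\gamma(E^A) \subseteq \gamma(E)$ (since $E^A$ is $E$ or $\top$). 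With these, Rules~\ref{rule:-r} and~\ref{rule:A} are routine. For Rule~\ref{rule:-r}, where $P=\forall r.P'$ and $s$ is replaced by $s^{-r}$, I rewrite the target $\gamma(C_1)\sqcap\cdots\sqcap\gamma(C_n) \sqsubseteq \forall r.\gamma(P')$ via the first identity as $\partial_r(\gamma(C_1)\sqcap\cdots\sqcap\gamma(C_n)) \sqsubseteq \gamma(P')$, push $\partial_r$ inside with the second identity, and dominate the result by the hypothesis $\gamma \models s^{-r}$ using $\gamma(C_i^{-r}) \subseteq \partial_r\gamma(C_i)$; this direction consumes only the increasing subsumptions. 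For Rule~\ref{rule:A}, where $P=A$ and the non-flat conjuncts are erased, the inclusion $\gamma(C_i^A) \subseteq \gamma(C_i)$ gives $\bigsqcap_i \gamma(C_i) \sqsubseteq \bigsqcap_i \gamma(C_i^A)$, and transitivity with $\bigsqcap_i \gamma(C_i^A) \sqsubseteq A$ closes the case.

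Rule~\ref{rule:fulldecomposition} is the crux, and I expect it to be the main obstacle, because here $\gamma(X)$ must be reconstructed from data scattered across the newly added subsumptions. By Lemma~\ref{lemma:subsumption} the target $\bigsqcap_i\gamma(C_i) \sqsubseteq \gamma(X)$ is the particle-set inclusion ``every particle of $\gamma(X)$ occurs in some $\gamma(C_i)$,'' so I argue particle by particle. A value-restriction particle $\forall r.Q \in \gamma(X)$ gives $Q \in \partial_r\gamma(X)$; the \emph{decreasing rule} then places $Q \in \gamma(X^r)$, the hypothesis $\gamma \models s^{-r}$ (whose right-hand side is $X^{-r}=X^r$) places $Q$ in some $\gamma(C_i^{-r}) \subseteq \partial_r\gamma(C_i)$, whence $\forall r.Q \in \gamma(C_i)$. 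A bare concept-name particle can only be $A$, since all other constants were replaced by $\top$ and $\gamma$ is ground over $\{A\}$; then property~\ref{property:constantDecomposition} forces $\gamma(X_A)=A$, so the added subsumption $\bigsqcap_i \gamma(C_i^A) \sqsubseteq \gamma(X_A)=A$ yields $A \in \bigsqcap_i \gamma(C_i^A)$ and hence $A \in \bigcup_i \gamma(C_i)$. This is the delicate point: the increasing subsumptions alone give only $\gamma(X^r) \subseteq \partial_r\gamma(X)$, whereas soundness needs the reverse inclusion, which is exactly why $\gamma$ must be assumed to obey the decreasing rule and the constant-decomposition property. With all three rules handled, the reverse induction closes and $\gamma \models \Gamma$.
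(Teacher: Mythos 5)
Your proof is correct, and structurally it is the same argument as the paper's: an induction along the sequence of rule applications, reduced to the local claim that a unifier of the goal after a step satisfies the one subsumption removed by that step. The difference is in substance. The paper disposes of the local claim with a single sentence (``an inspection of the rules shows that $\gamma$ unifies $\Gamma''$''), while you actually perform the inspection, and in doing so you surface what that sketch hides: for Rule~\ref{rule:fulldecomposition} the local claim is \emph{not} a consequence of $\gamma$ unifying the flattened goal alone. Your announced side hypotheses --- that $\gamma$ obeys the decreasing rule, satisfies $\gamma(X_A)=A \iff A\in\gamma(X)$, and introduces no constant other than $A$ --- are genuinely necessary, not merely convenient. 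For instance, for $\Gamma=\{\forall r.A \sqsubseteq^? X\}$ with $\roles = \{r\}$, flattening yields $\Gamma_A=\{A \sqsubseteq^? X^r,\ \top \sqsubseteq^? X_A,\ X \sqsubseteq^? \forall r.X^r\}$, and the substitution $\gamma(X)=\forall r.A \sqcap A$, $\gamma(X^r)=A$, $\gamma(X_A)=\top$ unifies $\Gamma_A$ but not $\Gamma$, violating exactly your constant-decomposition hypothesis; replacing the offending particle $A$ by a foreign constant $B$, or (with a second role $s$) by $\forall s.A$, shows the other two hypotheses are needed as well. So the lemma as literally stated is too strong, and your restricted reading is the one under which it holds --- and it is also the only way the lemma is used downstream, since the unifiers eventually fed into it (produced via Lemma~\ref{lemma:implsoundness} and Lemma~\ref{lemma:shsoundness}) do satisfy these side conditions, as does the extension constructed in the completeness Lemma~\ref{lemma:fltwocompleteness}. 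Beyond that, your $\partial_r$ calculus --- $C \sqsubseteq \forall r.D$ iff $\partial_r C \sqsubseteq D$, distribution of $\partial_r$ over conjunction, and the pointwise bounds $\gamma(E^{-r})\subseteq\partial_r\gamma(E)$ and $\gamma(E^A)\subseteq\gamma(E)$ --- is a clean packaging absent from the paper that turns the Rule~\ref{rule:-r} and Rule~\ref{rule:A} cases into one-liners. One cosmetic slip: ``$A \in \bigsqcap_i\gamma(C_i^A)$'' should be read as ``$A$ is a particle of the conjunction,'' i.e.\ $A\in\bigcup_i\gamma(C_i^A)$ under the paper's identification of normal-form concepts with particle sets (Lemma~\ref{lemma:subsumption}); your very next step already uses it in that correct form.
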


\begin{proof}
	The proof is by induction on the number of flattening steps needed to obtain $\Gamma_{A}$.
	
	Hence we proof a slightly stronger statement: if $\Gamma'$ is obtained from $\Gamma$ by a number of flattening steps,
	then the unifier of $\Gamma'$ is also a unifier of $\Gamma$.
	
	If the number is $0$, then $\Gamma'=\Gamma$, and we are done.
	
	Let $\Gamma'$ be obtained in $k$ steps from $\Gamma$.
	Consider the last rule applied to obtain this goal. Hence there is a goal $\Gamma''$ which is 
	transformed to $\Gamma'$ by a rule application and $\Gamma''$ is obtained from $\Gamma$ by $k-1$ steps. By assumption $\gamma$ unifies $\Gamma'$. An inspection of the rules shows that
	$\gamma$ unifies $\Gamma''$. 
	
	Since $\Gamma''$ is obtained by $k-1$ steps from $\Gamma$ we can use  induction to state that $\gamma$ unifies $\Gamma$.
\end{proof}

From Lemma~\ref{lemma:one-constant} we see that if we obtain $\gamma_{A_1}, \dots, \gamma_{A_k}$ unifiers of generic goals defined for all
different constants $A_1, \dots, A_k$, then $\gamma = \gamma_{A_1} \cup \cdots \cup \gamma_{A_k}$ is a unifier of the original goal.
\subsection{Choice}
At this stage FILO has to guess which variables should be $\top$ or which should contain
a constant, or which do not contain a constant but are not $\top$ either,  under a solution.
Hence for each variable we have 3 choices: TOP, CONSTANT, NOTHING (which mean not $\top$ and no constant, but a value restriction is possible for such a variable).

FILO has to check if there is a possible unifier for each such choice. This would create a huge searching space
for even small number of variables. 
In order to restrict the space, we identify the variables for which only binary choice between the choice values makes sense.
Hence instead of keeping all choices for variables in one table, we have 3 tables to encode the current choice:
\emph{choiceTable} for variables with ternary possible choices, \emph{fixedChoice} for variables that have to have one fixed choice, \emph{binaryChoiceTable} for variables with binary choice only.
In order to compute next choice FILO keeps a given choice for normal variables and changes in a lexicographic order the choices for the
variables in the binary choice table. Then if all binary choices are exhausted it changes to the next choice
of ternary choices for variables in the choiceTable.

In order to \emph{fix} choice in this way, FILO applies the following rules for all subsumption $s$ and variables in the generic goal.
\begin{itemize}
	\item If $s$ has the left side empty, and the right side is a variable, then the choice for this variable is
	fixed to be TOP. (The id of this variable is added to the \emph{fixedChoice} table with the value TOP.)
	\item If $s$ has only constant on its left side and a variable $X$ on its right side,
	then $X$ is added to the binary choice table, with possible values TOP and CONSTANT.
	\item If the right side of $s$ has only constant and it has only one variable $X$ on the left side,
	then $X$ is fixed to be CONSTANT.
	\item If a parent variable $X$ is fixed to be TOP, then all its decomposition variables and 
	constant decomposition variable must be fixed to be TOP.
	\item If a parent variable $X$ is fixed and is not CONSTANT, then the constant decomposition variable 
	$X_A$ is fixed to be TOP.
	\item If a constant decomposition variable $X_A$ is fixed to be TOP, the parent variable $X$ cannot be CONSTANT. ($X$ is a binary choice variable in this case.)
	
\end{itemize}

After fixing choices, FILO computes current Choice for all variables. In this choice all values for 
ternary and binary variables are fixed. Now, we check the \emph{consistency} of this choice, by
looking again at the relations between the values of parent variables and its decomposition and constant decomposition variables. 
FILO uses the following rules.

\begin{itemize}
	\item The current choice is not consistent if a parent variable is TOP and its decomposition variable is not TOP.
	\item The current choice is not consistent if a constant decomposition variable is CONSTANT and 
	its parent variable is not CONSTANT.
	\item The current choice is not consistent if a constant decomposition variable is not CONSTANT and
	its parent variable is CONSTANT.
\end{itemize}

If a choice is not consistent FILO rejects it and the next choice is computed.

%

If there are no consistent choices for variables, FILO returns failure.

\begin{example}\label{example:choice}
In Example~\ref{example:genericgoal}, we have 5 variables: $X, X^r, X_A,  Y, Y^r$. The first choice is a 5-tuple: $(0,1,0,0,0)$, where $0$ signifies TOP and $1$ signifies CONSTANT.
Notice that $X^r$ is fixed to be CONSTANT, because of the first  subsumption.
This choice is not consistent since  $X$ is TOP but $X^r$ is not TOP.
A consistent choice for this generic goal is: $(1, 1 ,1 , 0,0)$, but the goal defined for this choice is not unifiable, because it forces $X_A$ to be CONSTANT but $Y$ is TOP, hence
$Y \sqsubseteq^? X_A$ cannot be unified. This decision is made by Implicit Solver (cf. the next subsection).
\end{example}

Let us notice here that the fixing choices and checking consistency of choices does not cover all
possible restrictions, and may be further improved.
We can also observe that the intended meaning of constant decomposition variables is secured by 
a consistent choice: a solution for $X$ should contain the constant iff the solution for its constant decomposition variable contains the constant.

\subsection{Goal}
When FILO finds a consistent choice for variables, it tries to create a goal with respect to this choice.
The goal is a set of \emph{unsolved} flat subsumptions, increasing subsumptions and \emph{start subsumptions}.
The start subsumptions are of the form $X \sqsubseteq^? A$, and they are created for all variables for which
choice assigns CONSTANT.

If choice is applied, many flat subsumptions become either false or trivially true.
This is checked by \emph{Implicit Solver}, a class that contains checks for subsumptions \wrt the choice.

\begin{figure}[H]
	\fbox{
		\begin{minipage}{\linewidth}
			\textbf{Implicit Solver rules:}	
			
			
			\begin{enumerate}
				\item\label{irules:1} If a variable $X$ occurs on the right side of $s$ and the choice for $X$ is TOP, then label $s$ as solved.				
				\item\label{irules:2} If a variable $X$ occurs on the left side of $s$ and the choice for $X$ is TOP, then delete
				this particle from $s$.
				\item\label{irules:3} If a particle $P$ (here $P$ is a variable or constant) occurs on the right side of $s$ and also on the left side (at the top level), then label $s$ as solved.
				\item\label{irules:4} If the constant $A$ occurs on the right side of $s$ and there is a variable $X$ on the left side, such that choice for $X$ is CONSTANT, then label $s$ as solved.
				\item\label{irules:5} If the constant $A$ occurs on the right side of $s$, but on the left side of $s$ there is neither  $A$ nor a variable $X$ with choice CONSTANT, then \textbf{fail}.
				\item\label{irules:6} If $A$ occurs on the left side of a subsumption $s$ and $X$ is on its right side, where choice for $X$ is not CONSTANT,
				then delete this occurrence of $A$ from $s$. 
				\item\label{irules:7} If choice for $X$ is CONSTANT  and $X$ occurs on the right side of $s$, but neither $A$ occurs on the left hand side nor a variable $Y$ with choice for $Y$ CONSTANT occurs on the left hand side of $s$, then \textbf{fail}.
				\item\label{irules:8} If $\top \sqsubseteq^? X$ is an unsolved subsumption and $X$ is not TOP, then \textbf{fail}. If $A \sqsubseteq^? X$ is an unsolved subsumption and either $X$ is NOTHING or there is a decomposition variable $X^{r}$ which is not TOP, then \textbf{fail}.
				\item\label{irules:9} If there are no unsolved subsumptions left, return \textbf{success}.
			\end{enumerate}
		\end{minipage}
	}
	\caption{Implicit Solver}
	\label{figure:implicit}
\end{figure}

Let us notice that the rules of Implicit Solver although are based on the current choice,
are not checking consistency of choice values, but rather change the set of subsumptions.

The checks 5, 7 and 8 are \emph{critical}, because they can return fail for the current choice. They are
performed before other checks.

%

If there are no more unsolved subsumptions, then the problem has a solution which is obtained from the choice and the increasing subsumptions. FILO returns \emph{success} and constructs a solution.

\begin{example}\label{example:goal}
	Continuing with Example~\ref{example:choice}, FILO searches for a consistent choice, and finds $(1,1, 1,  1, 0)$. 
	For this choice $X^r \sqsubseteq^? A$ is trivially satisfied. 
	We are left with  unsolved flat subsumptions: $Y^r \sqcap X \sqsubseteq^? X^r, \, Y \sqsubseteq^? X_A,\, X^r \sqsubseteq^? Y$.
	Moreover, we have 4 start subsumptions: $X \sqsubseteq^? A, X^r \sqsubseteq^? A, X_A \sqsubseteq^? A, Y \sqsubseteq^? A$.
\end{example}

We say that a unifier $\gamma$ \emph{conforms} to a given choice $C$ iff for every variable $X$: $\gamma(X) = \top$ iff the choice for $X$ is TOP, $A \in \gamma(X)$ iff the choice for $X$ is CONSTANT, and both $A \not\in \gamma(X)$ and $\gamma(X) \not= \top$ iff the choice for $X$ is NOTHING.

The process of constructing a \emph{goal} $\Gamma$ from a generic goal  $\Gamma_A$ is terminating and correct \wrt unifiability.
This is formulated in the following two lemmas.

\begin{lemma}(completeness) \label{lemma:implcompleteness}
	Let $\gamma$ be a unifier of $\Gamma_A$. Then there is a choice $C$ and a goal $\Gamma$ obtained from $\Gamma_A$ 
	for which
	$\gamma$ conforms to $C$ and $\gamma$ unifies $\Gamma$.
\end{lemma}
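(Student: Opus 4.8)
The plan is to let $C$ be the canonical choice read off directly from $\gamma$: put $C(X) = \mathrm{TOP}$ when $\gamma(X) = \top$, $C(X) = \mathrm{CONSTANT}$ when $A \in \gamma(X)$, and $C(X) = \mathrm{NOTHING}$ otherwise. These three cases are mutually exclusive and exhaustive (a set of particles containing the particle $A$ is not $\top$), so $C$ is well defined and $\gamma$ conforms to $C$ by construction. Before anything else, though, I would adjust $\gamma$ on the constant decomposition variables, resetting $\gamma(X_A)$ to $A$ when $A \in \gamma(X)$ (for $X$ the parent of $X_A$) and to $\top$ otherwise; using Lemma~\ref{lemma:subsumption} one checks that this leaves the original and decomposition variables untouched and still unifies the single subsumption of $\Gamma_A$ in which $X_A$ occurs (the one added by rule~\ref{rule:constantdecomposition}), so we may assume from the outset that $\gamma$ realises the intended meaning of every constant decomposition variable.

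The second step is to show that $C$ is a choice the procedure actually produces, i.e.\ that it respects the fixing rules and passes every consistency check. Each fixing rule is matched against $\gamma$ through Lemma~\ref{lemma:subsumption}: an empty left-hand side forces $\gamma(X) = \top$, hence $C(X) = \mathrm{TOP}$; a lone constant $A$ on the left forces $\gamma(X) \subseteq \{A\}$, hence $C(X) \in \{\mathrm{TOP}, \mathrm{CONSTANT}\}$; a single left variable against $A$ on the right forces $A \in \gamma(X)$, hence $C(X) = \mathrm{CONSTANT}$. For the consistency checks, the increasing subsumption $X \sqsubseteq^? \forall r.X^r$ gives $\gamma(X) = \top \Rightarrow \forall r.\gamma(X^r) \equiv \top \Rightarrow \gamma(X^r) = \top$, so a TOP parent forces a TOP decomposition variable; and after the adjustment above $X_A$ is CONSTANT exactly when $X$ is CONSTANT, which is precisely what the two constant-decomposition consistency rules demand. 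Hence $C$ is consistent.

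The third step, the heart of the argument, is to run the Implicit Solver of Figure~\ref{figure:implicit} on $\Gamma_A$ under $C$ and verify that (a) none of the critical checks~\ref{irules:5}, \ref{irules:7}, \ref{irules:8} ever fires and (b) every rule that modifies a subsumption preserves unification by $\gamma$. For (b), marking a subsumption solved (rules~\ref{irules:1}, \ref{irules:3}, \ref{irules:4}) is harmless, deleting a TOP variable from the left (rule~\ref{irules:2}) leaves the $\gamma$-image of the left-hand side unchanged, and deleting $A$ from the left when the right variable is not CONSTANT (rule~\ref{irules:6}) preserves the subsumption precisely because $A \notin \gamma(X)$, so no particle of $\gamma(X)$ is lost from the left; both facts come straight from Lemma~\ref{lemma:subsumption}. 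For (a), the key observation is that $\gamma$ is a genuine unifier: if $A$ or a CONSTANT variable sits on the right, Lemma~\ref{lemma:subsumption} forces $A$ to occur among the $\gamma$-images of the left conjuncts, supplying a literal $A$ or a CONSTANT variable on the left and defeating the preconditions of checks~\ref{irules:5} and~\ref{irules:7}; while a reduced subsumption $\top \sqsubseteq^? X$ can survive unsolved only with $\gamma(X) = \top$ (hence $X$ is TOP and already solved by rule~\ref{irules:1}), and a reduced $A \sqsubseteq^? X$ with $X$ CONSTANT forces $\gamma(X) = \{A\}$, whence the increasing subsumptions make every $X^r$ equal to $\top$, exactly the situation in which check~\ref{irules:8} does not fail.

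Finally I would collect the output: the surviving unsolved flat subsumptions (unified by $\gamma$ by (b)), the increasing subsumptions (already unified, since they belong to $\Gamma_A$), and the start subsumptions $X \sqsubseteq^? A$ added for CONSTANT variables (unified because $A \in \gamma(X)$ gives $\gamma(X) \sqsubseteq A$ by Lemma~\ref{lemma:subsumption}). This set is the goal $\Gamma$, and $\gamma$ unifies all of it. The step I expect to be the real obstacle is the case analysis for the critical checks in step three, in particular pinning down exactly which syntactic configurations a partially reduced subsumption can reach under the Implicit Solver and confirming that a true unifier excludes every failing one; the preliminary normalisation of the constant decomposition variables in step one is a close second, since without it the canonical choice need not be consistent and the procedure would never generate it.
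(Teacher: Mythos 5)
Your core strategy coincides with the paper's proof: read the canonical choice off $\gamma$, run the Implicit Solver under it, check that the rules that delete subsumptions or left particles preserve $\gamma$ as a unifier, and that the critical checks \ref{irules:5}, \ref{irules:7}, \ref{irules:8} cannot fire because $\gamma$ is a genuine unifier (the paper packages this as an induction on the number of rule applications); your explicit treatment of the start subsumptions is even a detail the paper omits. The gap is your preliminary ``normalisation'' of $\gamma$ on the constant decomposition variables, on which everything downstream rests. The claim that resetting $\gamma(X_A) := A$ whenever $A \in \gamma(X)$ still unifies the subsumptions containing $X_A$ is false. Those subsumptions have $X_A$ only on their right-hand sides, so this reset strengthens them, and nothing in $\Gamma_A$ forces their left-hand sides to contain $A$ under $\gamma$: the subsumption $C_1 \sqcap \cdots \sqcap C_n \sqsubseteq^? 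X$ that guarantees exactly this in the proof of Lemma~\ref{lemma:fltwocompleteness} was removed by rule~\ref{rule:fulldecomposition} during flattening and is no longer available, and an arbitrary unifier of $\Gamma_A$ need not respect the intended meaning of $X_A$.

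Concretely, flattening $\{\,Y \sqcap \forall r.Y \sqsubseteq^? X\,\}$ (one role $r$) yields $\Gamma_A$ with flat subsumptions $Y^r \sqcap Y \sqsubseteq^? X^r$ and $Y \sqsubseteq^? X_A$ and increasing subsumptions $X \sqsubseteq^? \forall r.X^r$, $Y \sqsubseteq^? \forall r.Y^r$. The substitution $\gamma$ with $\gamma(X) = A$ and $\gamma(Y) = \gamma(Y^r) = \gamma(X^r) = \gamma(X_A) = \top$ unifies $\Gamma_A$, but your adjusted $\gamma'$ with $\gamma'(X_A) = A$ fails on $Y \sqsubseteq^? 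X_A$ since $\top \not\sqsubseteq A$; worse, under the choice read off $\gamma'$ the critical check~\ref{irules:7} fires on that very subsumption, so no goal unified by $\gamma'$ exists at all, and the whole argument collapses for $\gamma'$. The repair is to delete the adjustment: the lemma only asks for a choice to which the given $\gamma$ conforms and a goal that the given $\gamma$ unifies — it does not ask for a \emph{consistent} choice — and your third and fourth steps (equivalently, the paper's induction) already prove that for the original $\gamma$. You are right that without the adjustment the canonical choice can violate the constant-decomposition consistency rules (the example above does), but that is a gap in the paper's overall completeness theorem, not in this lemma, and closing it requires a stronger repair of $\gamma$ that propagates $A$ backwards into the variables occurring on the left of the $X_A$-subsumptions, not the local reset you propose.
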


\begin{proof}
	For each variable $X$ occurring in $\Gamma_A$, let us define a choice $C$ as follows:
	\begin{itemize}
		\item TOP iff $\gamma(X) = \top$,
		\item CONSTANT iff $A \in \gamma(X)$,
		\item NOTHING iff $A \not\in \gamma(X)$ and $\gamma(X) \not= \top$.
	\end{itemize}
	Obviously, 
	$\gamma$ conforms to $C$.
	
	The proof is by induction on the number of steps to obtain $\Gamma$. We will show that if $\Gamma'$ is obtained from $\Gamma_A$ by a number of steps, then the unifier of $\Gamma_A$ is also a unifier of $\Gamma'$.
	
	If the number is $0$, then $\Gamma'=\Gamma_A$, then the lemma is satisfied by assumption.
	
	Let $\Gamma'$ be obtained in $k+1$ steps from $\Gamma_A$. Hence there is a goal $\Gamma''$ such that $\Gamma''$ is obtained from $\Gamma$ in $k$-steps and 
	$\Gamma'$ is obtained from $\Gamma''$ by an application of one rule from Figure~\ref{figure:implicit}.  By assumption $\gamma$ unifies $\Gamma''$.
	
	Consider this rule. It must be one of the rules 1 through 8, because Rule 9 does not change a goal. Hence one of these rules is applied to a subsumption $s = C_1 \sqcap ... \sqcap C_n \sqsubseteq P$ in $\Gamma''$.  An inspection of the rules shows that $\gamma$ unifies $\Gamma'$.
	
	\begin{enumerate}
		\item If one of Rules 1, 3, 4 is applied, $s$ is deleted from the set of flat subsumptions. Hence since $\gamma$ is a unifier of $\Gamma''$, it is also a unifier of  $\Gamma'$. 
		\item If Rule 2 is applied, we delete a particle $C_i$ from the left side of $s$ such that $\gamma(C_i) = \top$. Such $C_i$, occurring on the left side of $s$, is redundant for the  subsumption to hold. Hence after this deletion, $\gamma$ is a unifier of $\Gamma'$.
		\item 
		Since $\gamma$ is a unifier of $\Gamma''$, Rules 5, 7 and 8 cannot be applied.
		\item If Rule 6 is applied, it means that $A \not\in \gamma(X)$ and there is $C_i$ such that $C_i = A$. Then, such $C_i$, occurring on the left side of $s$, is redundant for the modified subsumption to hold. Hence $\gamma$ is a unifier of $\Gamma'$ after the deletion of $C_i$.
	\end{enumerate} 
\end{proof}

\begin{lemma}(soundness) \label{lemma:implsoundness}
	We have two cases to show.
	\begin{enumerate}
	\item Let $\Gamma_A$ be a generic goal defined for a given constant $A$ and let $\Gamma$ be a goal defined from $\Gamma_A$ based on a
	 choice $C$. If there is a unifier of $\Gamma$ conforming to the choice $C$ then this unifier is also a unifier of $\Gamma_A$.
	 \item Let  $\Gamma_A$ be a generic goal defined for a given constant $A$, $C$ a consistent choice for variables such that Rule 9 of Figure~\ref{figure:implicit} applies. (Hence no $\Gamma$ is created). Then there is a unifier $\gamma_A$ of $\Gamma_A$.
	 \end{enumerate}
\end{lemma}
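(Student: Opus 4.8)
The plan is to treat the two statements separately, with statement (2) reducing to statement (1) once a suitable unifier has been constructed.

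For statement (1) I would argue by backward induction on the number of Implicit Solver steps used to obtain $\Gamma$ from $\Gamma_A$, exactly dualising the completeness argument of Lemma~\ref{lemma:implcompleteness}. The invariant is: if $\gamma$ conforms to $C$ and unifies the goal after $k+1$ steps, then it unifies the goal after $k$ steps. The three rules that merely \emph{mark a subsumption solved} (Rules~\ref{irules:1},~\ref{irules:3},~\ref{irules:4} of Figure~\ref{figure:implicit}) are handled using conformance: if Rule~\ref{irules:1} fired, the right-hand variable $X$ has choice TOP, so $\gamma(X)=\top$ and the subsumption holds trivially; if Rule~\ref{irules:3} fired, a particle occurs on both sides, so the subsumption holds by Lemma~\ref{lemma:subsumption}; if Rule~\ref{irules:4} fired, some left-hand variable $X$ has choice CONSTANT, so $A\in\gamma(X)$ and the left side is subsumed by $A$. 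The two rules that \emph{delete a left-hand particle} (Rules~\ref{irules:2},~\ref{irules:6}) are handled by monotonicity: reinstating a conjunct on the left only makes the left side smaller, hence preserves the subsumption. Rules~\ref{irules:5},~\ref{irules:7},~\ref{irules:8} cannot have been applied, since they report failure rather than produce a goal, and the increasing subsumptions are never touched. The added start subsumptions are extra constraints irrelevant to recovering the subsumptions of $\Gamma_A$. The base case $\Gamma=\Gamma_A$ is immediate.

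For statement (2), since Rule~\ref{irules:9} fired, no unsolved flat subsumption survives, so the final goal consists only of increasing and start subsumptions. I would first build a substitution $\gamma_A$ that conforms to $C$ and unifies this final goal, and then invoke statement (1) to conclude that $\gamma_A$ unifies $\Gamma_A$. The construction assigns $\gamma_A(X)=\top$ to every TOP variable, places the bare particle $A$ in $\gamma_A(X)$ for every CONSTANT variable, seeds each NOTHING variable with one value-restriction particle (for instance $\forall r.A$ for a role $r$ occurring in the problem, which is non-$\top$ yet does not contain the particle $A$), and finally closes the assignment \emph{upwards} along the increasing subsumptions: whenever $X\sqsubseteq^?\forall r.X^r$ is an increasing subsumption and $P\in\gamma_A(X^r)$, I add $\forall r.P$ to $\gamma_A(X)$. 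A structural observation makes this well defined and finite: each decomposition variable $X^r$ is created for a unique parent $X$ and role $r$, and original variables are never decomposition variables, so every node of the increasing-subsumption graph has at most one incoming edge; the graph is therefore a forest, the upward closure terminates, and every $\gamma_A(X)$ is a finite concept. Conformance is then verified directly: TOP variables stay $\top$ because consistency of $C$ forces every decomposition child of a TOP variable to be TOP, so no particle is ever propagated into a TOP variable; the bare particle $A$ is introduced only by the CONSTANT seed and never by the upward closure (which produces particles of the form $\forall r.P$), so a variable contains $A$ exactly when its choice is CONSTANT; and every NOTHING variable is non-$\top$ by the seed while never acquiring the bare $A$. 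The start subsumptions $X\sqsubseteq^? A$ hold since $A\in\gamma_A(X)$ for CONSTANT $X$, and the increasing subsumptions hold because the upward closure is exactly the condition $\gamma_A(X^r)\subseteq\{P\mid \forall r.P\in\gamma_A(X)\}$.

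The main obstacle is the construction in statement (2): reconciling the \emph{lower} bounds forced by the choice (a CONSTANT variable must contain $A$, a NOTHING variable must be non-$\top$) with the \emph{upper} bounds imposed by the increasing subsumptions (each decomposition child is bounded in terms of its parent) without disturbing the TOP pattern. The delicate point is that seeding a NOTHING variable and propagating its particles upward must never turn a TOP variable into a non-$\top$ one, nor introduce the bare particle $A$ where the choice forbids it; this is precisely where the consistency of $C$ (parent TOP implies children TOP, and the matching CONSTANT pattern between a variable and its constant decomposition variable) together with the firing of Rule~\ref{irules:9} (which guarantees that no surviving flat subsumption relates two distinct non-TOP variables in a way that would block the assignment) does the work. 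Verifying that the resulting conforming $\gamma_A$ unifies the final goal, and hence by statement (1) unifies $\Gamma_A$, completes the argument.
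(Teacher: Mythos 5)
Your proof is correct, and part (1) coincides with the paper's argument: the same induction on the number of Implicit Solver steps, with Rules~\ref{irules:1}, \ref{irules:3}, \ref{irules:4} reversed via conformance (or triviality), and Rules~\ref{irules:2}, \ref{irules:6} reversed because reinstating a left-hand conjunct only strengthens the left side. For part (2) you take a genuinely cleaner route. The paper builds essentially the same substitution (TOP variables to $\top$, the bare particle $A$ for CONSTANT variables, particles propagated from decomposition variables up to their parents), but it does not seed NOTHING variables and does not invoke part (1); it simply asserts that the first two clauses of the construction unify the flat subsumptions of $\Gamma_A$ and that the third clause unifies the increasing ones. Two consequences of this difference: first, the paper's substitution need not actually conform to the choice --- a NOTHING variable with no non-TOP decomposition child ends up mapped to $\top$ --- so its claim of conformance is, strictly speaking, inaccurate (harmless there, because its direct argument never uses the NOTHING clause of conformance); second, the assertion that the solved flat subsumptions of $\Gamma_A$ are unified is exactly the content of part (1), which the paper leaves implicit. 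Your seeding of NOTHING variables with $\forall r.A$ restores full conformance, and your explicit reduction of part (2) to part (1) turns that one-sentence assertion into a formal consequence, at the cost of the extra bookkeeping (the forest structure of the parent/decomposition graph, and checking that the seed and the upward closure never pollute the TOP or CONSTANT patterns), which you carry out correctly. One pedantic point: since the paper stipulates that no goal $\Gamma$ is created when Rule~\ref{irules:9} fires, to apply part (1) you must --- as you implicitly do --- take $\Gamma$ to be the residual set of increasing and start subsumptions; the induction of part (1) applies verbatim to this degenerate goal.
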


\begin{proof}
	\begin{enumerate}
\item 	We first prove the first statement.
	Assume that $\gamma$ is a unifier of $\Gamma$ which conforms to a choice $C$. The proof is by induction on the number of steps to obtain $\Gamma$. We will show that if $\Gamma'$ is obtained from $\Gamma_A$ by a number of steps, then the unifier of $\Gamma'$ is also a unifier of $\Gamma_A$.
	
	If the number is $0$, then $\Gamma'=\Gamma_A$, and the lemma is true by assumption.
	
	Let $\Gamma'$ be obtained in $k+1$ steps from $\Gamma_A$.
	
	Consider the last rule applied to obtain this goal. Hence there is a goal $\Gamma''$ which is 
	transformed to $\Gamma'$ by a rule application and $\Gamma''$ is obtained from $\Gamma_A$ by $k$ steps. By assumption $\gamma$ unifies $\Gamma''$. An inspection of the rules shows that
	$\gamma$ unifies $\Gamma'$. 
	Note that we do not have to inspect failing checks or Rule 9 (this rule does not change anything).
	
	\begin{enumerate}
		\item If Rule 1 was applied, it means that a subsumption $s= C_1 \sqcap ... C_m \sqsubseteq X$ was deleted, where choice for $X$ was TOP. By conformity of $\gamma$, $\gamma(X) = \top$. So, $\gamma$ is a unifier of $s$ and thus of $\Gamma''$.
		
		\item If Rule 2 was applied, it deleted a particle $C_i=X$ from a subsumption which was of the form: $s = C_i \sqcap ... C_m \sqsubseteq P$, where the choice for $X$ was TOP. By conformity of $\gamma$, we know that $\gamma(X) = \top$. Then, $X$ is redundant for the subsumption $s$ to hold. Hence, $\gamma$ is a unifier of $s$ before the deletion and thus it is a unifier of $\Gamma''$.
		
		\item If Rule 3 was applied, then it means that a subsumption $s= P \sqcap C_1 \sqcap ... C_m \sqsubseteq P$ was deleted. This subsumption is true under any substitution hence $\gamma$ is a unifier of $s$ and thus of $\Gamma''$.

		\item If Rule 4 was applied, then it means that it means that a subsumption $s= C_1 \sqcap ... C_m \sqsubseteq A$ was deleted, where $C_i = X$ and 
		choice for $X$ was CONSTANT. By conformity of $\gamma$, $A \in \gamma(X)$. Then $\gamma$ is a unifier of $s$ and thus of $\Gamma''$.
		
		\item If Rule 6 was applied, a particle $A$ is deleted from a subsumption of the form: $s = A \sqcap C_i \sqcap ... C_m \sqcap X$, where choice for $X$ is not CONSTANT. By conformity of $\gamma$, we know that  $A \not\in \gamma(X)$. 
		Then, $A$ is redundant for the subsumption $s$ to hold. Hence, $\gamma$ is a unifier of $\Gamma''$.
	\end{enumerate}
	
	Since $\Gamma''$ is obtained by $k-1$ steps from $\Gamma_A$ and $\gamma$ unifies $\Gamma''$, we use  induction to state that $\gamma$ unifies $\Gamma_A$.
	
	\item Now we prove the second statement of the lemma.
If Implicit Solver applies Rule 9 of Figure~\ref{figure:implicit}, then all subsumptions are solved due to choice for variables. 
FILO constructs a unifier $\gamma_A$ conforming to the current choice using recursion.
\begin{itemize}
	\item If choice for $X$ is TOP, then $\gamma_A(X) := \top$
	\item If choice for $X$ is CONSTANT, then if $\gamma(X)$ is not yet defined, then
	$\gamma(X) := A$, otherwise $\gamma_A(X) = \gamma_A(X) \cup \{A\}$.
	\item If $X^r$ is a decomposition variable for $X$ and $\gamma_A(X^r)$ is defined and not $\top$,
	then if $\gamma_A(X)$ is not defined, then  $\gamma_A(X) = \{\forall r.P \mid P \in \gamma_A(X^r)\}$, otherwise  $\gamma_A(X) = \gamma_A(X) \cup \{\forall r.P \mid P \in \gamma(X^r)\}$.
\end{itemize}
Notice that the consistency of choice is crucial for this construction. For example, if choice for $X$ is TOP, then the choice for $X^r$ must also be TOP.
This construction terminates, because since flat subsumptions are solved by choice, they do not require creation of new particles. The first two steps of the construction assure the unification of flat subsumptions and the third step is needed to unify the increasing subsumptions.	
	\end{enumerate}
\end{proof}


\subsection{Computing with shortcuts}
All the previous steps of the procedure are either polynomial or nondeterministic polynomial. This
step is inherently exponential, although still the actual time needed depends on the problem.
We assume that FILO produced a goal $\Gamma$ from a generic goal for a constant $A$.
$\Gamma$ contains flat unsolved subsumptions, increasing subsumptions and start subsumptions.
This stage is triggered if the set of flat unsolved subsumptions and start subsumptions in the goal are not empty.

At this stage the algorithm works with \emph{shortcuts} \ie subsets of variables and constant from the goal.

\begin{definition}
	\begin{enumerate}
\item A set  $\{E_1, \dots, E_k\}$, where each $E_i$ is either a variable from  $\Gamma$ or a constant, \emph{satisfies}
a flat subsumption $C_1 \sqcap \cdots \sqcap C_n \sqsubseteq D$	if it satisfies the following implication:
if $D \in \{E_1, \dots, E_k\}$, then $\{E_1, \dots, E_k\} \cap \{C_1, \dots,C_n\} \not=\emptyset$
	
\item A set $\{C_1, \dots, C_n\}$, where each $C_i$ is either a variable from  $\Gamma$ or a constant, is 
a shortcut iff it \emph{satisfies} all unsolved flat subsumptions in $\Gamma$.
\end{enumerate}
\end{definition}
\begin{example}
	Let the unsolved flat subsumptions in $\Gamma$ be: $Y \sqcap Z \sqsubseteq X,\, Y \sqcap U \sqsubseteq X$.
	The set $\{X, Z \}$ is not a shortcut, because the second subsumption is not satisfied.
	The sets $\{X, Z, Y \}$, $\{Y, Z\}$, $\{X, Y\}$ are examples of shortcuts in this case.
\end{example}
%

We have to identify some relations between shortcuts.

\begin{definition}\label{definition:resolving}
	\begin{enumerate}
\item 	Let $S_1, S_2$ be two shortcuts. We say that $S_2$ resolves $S_1$ \wrt to a role name $r$ if the following conditions obtain:
	\begin{enumerate}
		\item\label{condition:increasing} There is a decomposition variable $X^r$ in $S_1$ and for each decomposition variable $Y^r \in S_1$, its parent $Y$ is in $S_2$.
		\item\label{condition:decreasing} If $X \in S_2$ and $X^r$ is defined, then $X^r \in S_1$.
	\end{enumerate}
	\item We call a shortcut $S$ \emph{resolved} in a set of shortcuts \S\ iff for each role name $r$ such that there is a decomposition variable 
	$X^r$ in $S$, there is $S' \in \S$ such that $S$ is resolved \wrt the role name $r$ with $S'$.
	\end{enumerate}
\end{definition}
\begin{example}
The following is the example of the resolving relation between the shortcuts.\\
	\begin{tikzpicture}
		\matrix (m) [matrix of math nodes, row sep=3em, column sep=4em, minimum width=2em]
		{ S_2 = \{X, Y^s, Z\} \\
				S_1 = \{U, Z^r, Y^{sr}\} \\};
		\path[-stealth]
		(m-2-1) edge node [left] {$r$} (m-1-1)
		;
	\end{tikzpicture}
	Here $X$ does not have $X^r$ defined, $U$ is not a decomposition variable $Y^s$ has a decomposition variable $Y^{sr}$ defined.
\end{example}

Here we make an observation, that shortcuts may be used to attempt to construct a unifier of $\Gamma$ using the resolving relation between them.
If for example, a new particle $P$ is assigned to all variables in a shortcut $S_1$ then we may be sure that this assignment makes all flat subsumptions in $\Gamma$ true. If $X^r$ is in $S_1$, then this assignment makes the increasing subsumption $X \sqsubseteq \forall r.X^r$ not true, because
$\forall r.P$ is not in an assignment for $X$ ($P$ is assumed to be new).
But since there is $S_2$, a shortcut that resolves $S_1$ \wrt the role name $r$, we can repair this situation, by putting 
the new particle $\forall r.P$ into the assignment for all variables in $S_2$. 
By the definition of the resolving relation, $X \in S_2$, hence $\forall r.P$ is in $X$.

One can also notice that thus the condition~\ref{condition:increasing} of Definition~\ref{definition:resolving} allows us to build assignment for variables in such a way as to unify
the increasing subsumptions, and the condition~\ref{condition:decreasing} makes sure that also the decreasing rule is satisfied.

FILO computes shortcuts in such a way, that together with the resolving relation they yield an acyclic graph.

At first FILO computes all shortcuts that do not contain decomposition variables. These are the so called \emph{shortcuts of height 0}. 

The set of shortcuts are then created recursively. If a decomposition variable has a parent that occurs in an already computed shortcut then it is added to the set of the so called \emph{good variables}. 
A new shortcut $S$ (possibly containing decomposition variables  but only those that are good) is added to 
the set of already computed shortcuts if it is resolved in the set of already computed shortcuts \S.
FILO greedily searches for a shortcut resolving $S$ in \S and stores the information in the structure
representing the shortcut $S$. This information is then used in creating a solution if there is one.

Computing shortcuts terminates if no more shortcuts are added to the set of already computed ones, or
if the so called \emph{initial} shortcut is computed.
The initial shortcut has the form $S_{ini} = \{X \mid X \sqsubseteq^? A \text{ is a start subsumption }\} \cup \{A\}$.

In the first case (no more shortcuts), FILO returns \emph{failure} for the current Choice for variables.

In the second case (initial shortcut is resolved) FILO returns \emph{success}. It then proceeds to construct an example of a unifier based
on the resolving relation between shortcuts. 

\begin{example}
	Continuing with Example~\ref{example:goal}, the initial shortcut is $\{A, X, Y, X^r, X_A \}$.
	$A$ can occur only in the initial shortcut.
	The only shortcut of height $0$ is $\{X\}$. FILO proceeds to compute shortcuts of height 1: $\{X, X^r\}$, $\{X, Y, X^r\}$,  $\{X, Y, X^r, X_A\}$ and finally
	 $\{X, Y, X^r, X_A, A\}$ which is also the initial shortcut. At this moment the computation is terminated with \emph{success}.
	 The solution is: $[X \mapsto A \sqcap \forall r.A,\, Y \mapsto A, X^r \mapsto A, X_A \mapsto A]$. FILO shows only user variables in the final display of the solution.
\end{example}

Finally, we can prove that the computing with shortcuts stage is correct with respect to the unifiability of the goal $\Gamma$ obtained
from a generic goal $\Gamma_A$ for a given constant $A$.

\begin{lemma}(completeness) \label{lemma:shcompleteness}
	If  the set of unsolved flat subsumptions of $\Gamma$ is not empty, the set of start subsumptions of $\Gamma$ is also not empty, and $\Gamma$ is unifiable, then computation of shortcuts terminates with \emph{success}.
\end{lemma}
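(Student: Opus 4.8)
The plan is to read a system of shortcuts directly off a unifier and to show that this system is exactly what the algorithm builds. So let $\gamma$ be a unifier of $\Gamma$; we may assume that $\gamma$ conforms to the consistent choice $C$ from which $\Gamma$ was built and that it obeys the decreasing rule, so that $\gamma(X^r) = \{P \mid \forall r.P \in \gamma(X)\}$ holds for every decomposition variable $X^r$ (this is the form of unifier produced in Lemma~\ref{lemma:fltwocompleteness} and selected in Lemma~\ref{lemma:implcompleteness}). Since every $\gamma(E)$ is a finite concept in normal form and there are finitely many variables, the set $\Pi = \{P \mid P \in \gamma(E)\text{ for some variable or constant }E\}$ of particles occurring in the solution is finite; it is moreover closed under taking immediate subparticles, because $\forall r.P \in \gamma(X)$ forces $P \in \gamma(X^r)$. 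For each $P \in \Pi$ I set $S_P = \{E \mid P \in \gamma(E)\}$, where $E$ ranges over the variables of $\Gamma$ and the constant $A$, so that $A \in S_P$ iff $P = A$.

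First I would check that each $S_P$ is a shortcut. Given a flat subsumption $C_1 \sqcap \cdots \sqcap C_n \sqsubseteq D$ with $D \in S_P$, i.e. $P \in \gamma(D)$, the fact that $\gamma$ unifies the subsumption together with Lemma~\ref{lemma:subsumption} gives that $P$ is a conjunct of some $\gamma(C_i)$, hence $C_i \in S_P$; thus $S_P$ satisfies the subsumption. Next I would establish the central structural fact: whenever $S_P$ contains a decomposition variable for a role $r$, the shortcut $S_{\forall r.P}$ resolves $S_P$ with respect to $r$ in the sense of Definition~\ref{definition:resolving}. Both conditions reduce to the identity $\gamma(Y^r) = \{Q \mid \forall r.Q \in \gamma(Y)\}$: condition~\ref{condition:increasing} holds because $P \in \gamma(Y^r)$ means $\forall r.P \in \gamma(Y)$, i.e. $Y \in S_{\forall r.P}$; condition~\ref{condition:decreasing} holds because $X \in S_{\forall r.P}$ with $X^r$ defined means $\forall r.P \in \gamma(X)$, hence $P \in \gamma(X^r)$, i.e. $X^r \in S_P$. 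Note that $S_{\forall r.P}$ is nonempty here, since $X^r \in S_P$ already yields $\forall r.P \in \gamma(X)$, so $X \in S_{\forall r.P}$.

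The termination-with-success argument then goes by induction on the height $h(P)$, defined as the length of the longest word $v$ over \roles\ with $\forall v.P \in \Pi$, which is finite because $\Pi$ is. For the base case $h(P) = 0$ the set $S_P$ contains no decomposition variable, so it is one of the shortcuts of height $0$ computed at the start and it is vacuously resolved. For the inductive step, if $X^r \in S_P$ then $\forall r.P \in \Pi$ and $h(\forall r.P) < h(P)$, so by the induction hypothesis $S_{\forall r.P}$ already lies in the computed set \S; its membership also witnesses that the parent of every decomposition variable $X^r \in S_P$ occurs in an already computed shortcut, so these variables are \emph{good}. By the structural fact $S_{\forall r.P}$ resolves $S_P$ with respect to each relevant $r$, hence $S_P$ is resolved in \S\ and is added. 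Since the monotone growth of \S\ over the finite collection of subsets of $\var \cup \{A\}$ forces termination, and since by conformity $S_{ini}$ coincides with $S_A$ — the variables carrying a start subsumption $X \sqsubseteq^? A$ are precisely those with $A \in \gamma(X)$ — the induction in particular places $S_{ini}$ into \S\ as a resolved shortcut, which is exactly the success condition.

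The main obstacle I expect is not any single verification but the bookkeeping that welds them together: guaranteeing that the algorithm's greedy, height-by-height construction really produces each $S_P$ before it is needed as a resolvent, which hinges on the strict drop $h(\forall r.P) < h(P)$ and on the \emph{good variable} side condition matching the induction hypothesis exactly. A secondary delicate point is the use of both inclusions of the decomposition identity $\gamma(X^r) = \{Q \mid \forall r.Q \in \gamma(X)\}$: the increasing subsumptions of $\Gamma$ supply only one inclusion, and one must invoke the decreasing rule (guaranteed by Lemma~\ref{lemma:fltwocompleteness}) for the other, since that is what makes condition~\ref{condition:decreasing} of Definition~\ref{definition:resolving} hold rather than merely condition~\ref{condition:increasing}.
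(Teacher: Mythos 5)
Your proof is correct, and it rests on the same core construction as the paper's own argument: both read shortcuts off a unifier via $S_P=\{E \mid P\in\gamma(E)\}$, both obtain the height-$0$ shortcuts from the deepest particles, and both use the two inclusions of $\gamma(X^r)=\{Q\mid \forall r.Q\in\gamma(X)\}$ to show that $S_{\forall r.P}$ resolves $S_P$ \wrt $r$. Where you genuinely diverge is in the inductive organization. The paper assumes a unifier that is \emph{minimal} in the role depth of the particles it introduces, and traces a downward chain from a maximal-depth particle $P_k=\forall r.P_{k-1}$ toward the constant, invoking minimality to guarantee that some $Y\in S_k$ has $Y^r$ defined (``otherwise replacing $P_k$ with $A$ would also be a unifier''), so that $P_{k-1}$ occurs at top level and the chain can continue down to $A$. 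You instead induct on the suffix height $h(P)$ and show that \emph{every} $S_P$ with $P\in\Pi$ is eventually added: your resolving witness $S_{\forall r.P}$ points to a strictly deeper particle that is automatically top-level (from $P\in\gamma(X^r)$ and the increasing subsumption), so you never take a downward step and never need minimality. This buys you something real, since the minimality step is the shakiest part of the paper's proof (uniformly replacing the particle $P_k$ by $A$ does not obviously preserve the increasing subsumptions, because $P_k$ may occur as a subparticle of other particles in the range of $\gamma$); the price is that you must make explicit two side conditions the paper uses only tacitly, namely that $\gamma$ conforms to the choice $C$ (needed so that $S_A$ coincides with $S_{ini}$; the paper uses this silently when it declares the shortcut for the particle $A$ to be the initial one) and that $\gamma$ obeys the decreasing rule (the paper uses this silently in the step ``$P_{k-1}\in\gamma(Y^r)$''). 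You correctly discharge both by provenance from Lemma~\ref{lemma:fltwocompleteness} and Lemma~\ref{lemma:implcompleteness}, which is exactly how the lemma is invoked in the completeness theorem.

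One small blemish: your parenthetical claim that $\Pi$ is closed under immediate subparticles is false as stated, since $\forall r.P\in\gamma(X)$ forces $P\in\gamma(X^r)$ only when the decomposition variable $X^r$ is actually defined. But your induction never uses this closure --- it only ever moves upward from $P$ to $\forall r.P$, and $\forall r.P\in\Pi$ is established directly from the increasing subsumption --- so nothing in the argument breaks; just delete the remark.
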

\begin{proof}
	If the assumptions of the lemma are satisfied, then the computation of shortcuts is triggered.
	We show that FILO will not terminate with failure.
	
	Consider a unifier $\gamma$ of $\Gamma$. We assume that $\gamma$ does not introduce any constant other then $A$. 
	We also assume that it is \emph{minimal} with respect to the role depth of the particles it introduces in the assignments for variables.
	
	For every flat subsumption: $C_1 \sqcap \cdots \sqcap C_n \sqsubseteq D \in \Gamma$,
	$\gamma(C_1) \sqcap \cdots \sqcap \gamma(C_n )\sqsubseteq \gamma(D )$.
	We can assume that $\gamma$ assigns reduced concepts, hence each $\gamma(C_i)$ is a conjunction of particles.
	We choose the particles in the range of $\gamma$ of the maximal role depth.
	If $P$ is such a particle, it defines a shortcut of height $0$ in the following way:
	$S = \{X \mid P \in \gamma(X)\}$. By the properties of subsumption in \flo, this set must be a shortcut.
	It cannot contain decomposition variables, hence it is a shortcut of height $0$.
	
	Hence FILO will not terminate with failure while computing shortcuts of height $0$.
	
	Now assume that FILO did not terminate with failure at the $i$'th round of computation, and
	the initial shortcut is not yet computed. In the $i$'th round we have considered particles in the range of $\gamma$
	of the role depth $k$. (We have defined resolved shortcuts for such particles, just as in the case of height $0$ shortcuts.)
	
	Let us consider a resolved shortcut defined for a particle $P_k$ of the role depth $k$, $S_k = \{X \mid P_k \in \gamma(X)\}$.
	This shortcut is in the set \S\ of the already computed shortcuts.
	
	Now we extend this set for the shortcuts defined for particles of the role depth $k-1$.
	Indeed, $P_k$ must be either a constant or of the form $\forall r.P_{k-1}$. We consider these two cases separately.
	
	If $P_k = \forall r.P_{k-1}$, and $\forall r.P_{k-1} \in \gamma(X)$, where $X \in S_k$. By minimality of $\gamma$,
	there must be a variable $Y \in S_k$ such that $Y^r$ is defined. Otherwise replacing $P_k$ with $A$ would also be a unifier.
	(Remember that $S_k$ contains all variables with $P_k$ in $\gamma$.)
	
	$P_{k-1} \in \gamma(Y^r)$. Hence there must be a shortcut $S_{k-1} = \{X \mid P_{k-1} \in \gamma(X)\}$. This shortcut is resolved
	with $S_k$ \wrt the role name $r$. Hence FILO will not terminate with \emph{failure} at $i$'th round, but will compute $S_{k-1}$.
	
	Now consider the case where $P_k$ is a constant. Hence $P_k$ is really of the role depth $0$.
	Then $S_k$ is the initial shortcut and FILO terminates with \emph{success}.
	
\end{proof}

\begin{lemma}(soundness) \label{lemma:shsoundness}
	If FILO has computed shortcuts for $\Gamma$ and terminated with \emph{success}, then  $\Gamma$ is unifiable.
\end{lemma}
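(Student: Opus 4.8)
The plan is to read an explicit unifier $\gamma$ of $\Gamma$ off the resolving relation that FILO recorded while succeeding. Following the intuition given before the lemma, each shortcut reached from the initial shortcut will be made \emph{responsible} for one particle that is inserted into $\gamma(X)$ for every $X$ in that shortcut. Concretely I would define $\gamma$ by the recursion: process $S_{ini}$ with the particle $A$; to process a shortcut $S$ with a particle $P$, add $P$ to $\gamma(X)$ for every variable $X\in S$, and for every role $r$ for which $S$ contains a decomposition variable, recursively process the resolving shortcut that FILO stored for $S$ and $r$ with the deeper particle $\forall r.P$. Termination with \emph{success} means exactly that $S_{ini}$ is resolved, and any shortcut was added to the computed set \S\ only after being resolved in it, so each recursive call lands on a resolved shortcut and is well defined.

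The first thing to establish is termination. A resolved shortcut that contains a decomposition variable is created only after the shortcut carrying the corresponding parents, so every stored resolving edge points to a shortcut computed strictly earlier; hence the resolving relation forms a finite acyclic graph with the height-$0$ shortcuts as sinks. Since the recursion only walks along these edges and a finite directed acyclic graph has finitely many directed paths out of $S_{ini}$, the construction halts (the role depth of the carried particle increases by one along each edge, which also shows that no particle can be produced twice along the same path).

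Next I would isolate the invariant that drives the verification: because FILO stores a single resolving shortcut per shortcut and role, the path out of $S_{ini}$ spelled by a word $w$ of roles is deterministic, reaches a unique shortcut, and is the only processing that ever produces the particle $\forall w.A$. Consequently, for every particle $P$ in the range of $\gamma$, the set $\{X\mid P\in\gamma(X)\}$ is exactly one shortcut. With this in hand, unifiability of $\Gamma$ follows by Lemma~\ref{lemma:subsumption} on each kind of subsumption. For a flat subsumption $C_1\sqcap\cdots\sqcap C_n\sqsubseteq D$ and a particle $P\in\gamma(D)$, the set $\{X\mid P\in\gamma(X)\}$ is a shortcut containing $D$, so by the definition of \emph{satisfies} it meets $\{C_1,\dots,C_n\}$, giving $P\in\gamma(C_i)$ for some $i$. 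For an increasing subsumption $X\sqsubseteq\forall r.X^r$ and a particle $P\in\gamma(X^r)$, the shortcut that produced $P$ contains the decomposition variable $X^r$, so role $r$ was handled when it was processed and, by condition~\ref{condition:increasing} of the resolving relation, the parent $X$ lies in the resolving shortcut that received $\forall r.P$; hence $\forall r.P\in\gamma(X)$. For a start subsumption $X\sqsubseteq A$ we have $X\in S_{ini}$ by definition of the initial shortcut, and $S_{ini}$ was processed with $A$, so $A\in\gamma(X)$.

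Finally I would observe that condition~\ref{condition:decreasing} of the resolving relation ensures that whenever $\forall r.P$ is inserted into $\gamma(X)$ one has $P\in\gamma(X^r)$, so $\gamma$ also obeys the decreasing rule and the decomposition variables carry their intended values; this consistency is what lets the per-constant unifiers be recombined into a unifier of the whole problem. I expect the main obstacle to be precisely the termination argument together with the single-shortcut invariant: the three subsumption checks are then routine consequences of Lemma~\ref{lemma:subsumption}, but each of them silently relies on the fact that a given particle is generated by exactly one shortcut, which in turn rests on the deterministic, acyclic shape that FILO imposes on the resolving graph.
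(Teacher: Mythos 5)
Your proof is correct and follows essentially the same route as the paper: the same recursive construction that starts by assigning $A$ to the initial shortcut and propagates $\forall r.P$ along the stored resolving relation, with termination from acyclicity and verification of the flat, start, and increasing subsumptions (plus the decreasing rule via condition~\ref{condition:decreasing}). You merely make explicit two points the paper leaves implicit --- that the resolving edges point to strictly earlier-computed shortcuts, and that each particle in the range of $\gamma$ is produced by exactly one shortcut (note only the cosmetic slip that the particle along the path spelled by $w = r_1\cdots r_k$ is $\forall r_k \cdots \forall r_1.A$, the reversed word).
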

\begin{proof}
	Since FILO terminated with \emph{success}, it computed the initial shortcut: $\{X \mid X \sqsubseteq A\}$. FILO stores information about
	the resolving relation with each computed shortcut.
	Hence we can recursively construct a unifier.
	\begin{enumerate}
		\item At first we substitute all variables in the initial shortcut with $A$.
		\item For the recursion, assume that a particle $P$ was assigned to all variables in a shortcut $S$.
		\item If $S$ is resolved with $S'$ with respect to the role $r$, assign the particle $\forall r.P$ to all
		variables in $S'$.
	\end{enumerate}
	
	Since the resolving relation is acyclic, this process with terminate. Indeed it will terminate with the assignments for particles of maximal role depth
	to the variables in some shortcuts of height $0$.
	
	By the properties of shortcuts we are sure that this assignments unifies all flat subsumptions in $\Gamma$, all start subsumptions and 
	the increasing subsumptions. The decreasing rule is also satisfied, because of condition~\ref{condition:decreasing} in Definition~\ref{definition:resolving}.
\end{proof}
	
	\subsection{Complexity and correctness}
	
	Here we sum up the facts about the algorithm implemented in
	FILO and described here.
	
	\begin{theorem}(termination and complexity)
	The algorithm implemented in FILO terminates on every unification problem in \flo in at most exponential time in the size of the problem.
	The unification problem in \flo is ExpTime-complete.
	\end{theorem}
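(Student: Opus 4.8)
The plan is to prove the two assertions separately: first establish termination together with the exponential upper bound by walking through the pipeline of FILO's stages, and then obtain ExpTime-completeness by pairing this upper bound with the lower bound that is already on record.

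For termination and the time bound I would analyze each stage in turn and show that every stage is either polynomial or bounded by a single exponential, and that the stages compose so that the total stays within one exponential. Reading the input and flattening I run in polynomial time, introduce at most polynomially many system variables, and preserve unifiability by Lemma~\ref{lemma:flatteningone}. The main loop iterates over the constants of the problem, of which there are at most linearly many. For a fixed constant $A$, flattening II terminates by Lemma~\ref{lemma:fltwotermination}; the refinement I would add is that the generic goal $\Gamma_A$ has only polynomially many variables and subsumptions. This is exactly what the termination argument gives us: the number of rule applications is bounded by the number of offending particles (those of the form $\forall r.P$ or a constant) in the original goal, no rule creates new offending particles, and each application creates at most one decomposition variable per left-hand variable, so the total count of decomposition and constant-decomposition variables is polynomial.

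The two genuinely expensive stages are the choice enumeration and the shortcut computation, and I would bound them as follows. For a generic goal with $n$ variables there are at most $3^n$ choices; the fixing rules, the consistency test, and the Implicit Solver of Figure~\ref{figure:implicit} process each choice in polynomial time, so this enumeration contributes a single exponential factor. For each consistent choice the shortcut stage ranges over subsets of the (polynomially many) variables and constants, hence there are at most $2^m$ candidate shortcuts for $m$ the number of variables and constants; testing that a subset satisfies all flat subsumptions and greedily searching for a resolving shortcut is polynomial in the number of shortcuts already computed, and the recursion halts once no new shortcut is added or the initial shortcut is reached. Thus each shortcut computation is exponential, and the product over linearly many constants and exponentially many choices is still a single exponential, giving the termination-and-exponential-time claim. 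Correctness of the verdict, needed for the membership direction, follows by chaining the soundness and completeness lemmas already proved: Lemma~\ref{lemma:flatteningone} for flattening I, Lemmas~\ref{lemma:fltwocompleteness} and~\ref{lemma:fltwosoundness} for flattening II, Lemmas~\ref{lemma:implcompleteness} and~\ref{lemma:implsoundness} for the choice/goal stage, and Lemmas~\ref{lemma:shcompleteness} and~\ref{lemma:shsoundness} for the shortcut stage, so that $\Gamma_A$ is unifiable iff FILO reports success for $A$; Lemma~\ref{lemma:one-constant} then combines the per-constant verdicts and assembles the per-constant unifiers by intersection. Since FILO therefore decides unifiability in exponential time, the problem lies in ExpTime, and the matching ExpTime-hardness from~\cite{BaNa-JSC01} yields ExpTime-completeness.

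The step I expect to be the main obstacle is ensuring that no stage hides a second exponential. Concretely, one must verify that flattening II keeps the number of decomposition variables polynomial -- naively, iterated decomposition over nested value restrictions looks as though it could generate role-words and hence exponentially many auxiliary variables, and it is precisely the step-count bound from Lemma~\ref{lemma:fltwotermination} that rules this out. Only once this polynomial bound is secured does ``subsets of variables'' amount to a single exponential rather than a double exponential, so establishing it carefully is the crux of the upper-bound argument.
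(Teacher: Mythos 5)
Your proposal is correct and follows essentially the same route as the paper's own proof: a stage-by-stage walk through the pipeline, with the exponential bound coming from (number of constants) $\times$ (exponentially many choices) $\times$ (exponentially many shortcuts), each individual step being polynomial. You are in fact somewhat more explicit than the paper on two points it leaves implicit -- the polynomial bound on the number of variables produced by flattening II (via Lemma~\ref{lemma:fltwotermination}), which is indeed needed so that the choice and shortcut enumerations are single exponentials, and the appeal to the ExpTime lower bound of \cite{BaNa-JSC01} for the completeness half of the second claim.
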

	\begin{proof}
		The algorithm terminates:
		\begin{enumerate}
			\item While reading a unification problem form a file, FILO preforms flattening I. It terminates, because there are finitely many
			sub-concepts of the form $\forall r.C$ where $C$ is a complex concept, in the problem. After distributing value restriction $\forall r.(C_1 \sqcap C_2)$ over
			conjunction $\forall r.C_1 \sqcap \forall r.C_2$ and  abstracting  sub-concepts that are not variables or constants with fresh variables (Example~\ref{example:flatteningI}) this stage must terminate with a structure
			called \emph{filo model}. This is done in polynomial time in the size of the problem. New variables are created.
			\item Next, if there are no constants, FILO terminates. Otherwise, for each constant FILO enters a loop. Since there are finitely many constants
			in a problem, FILO terminates if every such loop terminates. Hence consider a loop for one constant.
			\item  First flattening II is preformed, which yields a \emph{generic goal}.
			This stage terminates by Lemma~\ref{lemma:fltwotermination} This is a polynomial time step, where some new variables are created.
			\item Second, a consistent choice  for variables is searched for. If there is no consistent choice, then
			all choices are checked, and then FILO terminates with failure. This is a non-deterministic polynomial step. Hence
			failure at this stage may cost exponential time in the number of all variables.
			\item If a consistent choice is found, then Implicit Solver performs critical checks. It can fail for the current choice only.
			Otherwise FILO creates a \emph{goal}. If the goal does not have any flat subsumptions, FILO terminates with success.
			This is still a polynomial time step.
			\item Next, if there are unsolved flat subsumptions, FILO enters the phase of computing shortcuts.
			It can at most compute all possible shortcuts, hence exponentially many of them. If the initial shortcut is not computed,
			it fails for the current choice. 
			\item If the initial shortcut is computed FILO terminates with success.
		\end{enumerate}
		
		Overall a loop for one constant may take exponentially many steps (number of choices) times exponentially many steps (the number of shortcuts),
		which yields exponential time for one loop. There are polynomially many such loops possible, hence FILO takes at most exponential time in the size of the problem, to terminate.
		
	\end{proof}
	\begin{theorem}(soundness)
	If FILO terminates with \emph{success} on a unification problem $\Gamma$, then $\Gamma$ is unifiable.
	\end{theorem}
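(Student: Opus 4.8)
The plan is to prove soundness purely by composing the per-stage soundness lemmas already established, walking the FILO pipeline in reverse. The first thing I would pin down is what termination with \emph{success} means operationally: since the main loop over constants breaks and returns \emph{failure} the moment any single constant fails, success can be reported only after FILO has succeeded for \emph{every} constant $A_1, \dots, A_k$ occurring in the filo model. (If there are no constants, the problem is trivially unifiable by the substitution sending every variable to $\top$, so I may assume $k \ge 1$.) Hence it suffices to manufacture, for each $A_i$, a unifier $\gamma_{A_i}$ of the generic goal $\Gamma_{A_i}$, and then to glue these together across constants.

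Next I would fix a single constant $A_i$ and trace the chain of lemmas that turns a local success into a unifier of $\Gamma_{A_i}$. Local success arises in one of two ways. If Implicit Solver reaches Rule~\ref{irules:9} already during goal creation (so no goal $\Gamma$ is even produced), then the second part of Lemma~\ref{lemma:implsoundness} directly supplies a unifier $\gamma_{A_i}$ of $\Gamma_{A_i}$. Otherwise FILO proceeds to computing shortcuts and terminates because the initial shortcut $S_{ini}$ has been resolved; then Lemma~\ref{lemma:shsoundness} yields a unifier $\gamma_{A_i}$ of the created goal, which I would lift to a unifier of $\Gamma_{A_i}$ through the first part of Lemma~\ref{lemma:implsoundness}. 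Either way I obtain a unifier of $\Gamma_{A_i}$, and a single application of Lemma~\ref{lemma:fltwosoundness} shows it is a unifier of the filo model with every constant other than $A_i$ replaced by $\top$.

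Finally I would combine the constant-wise unifiers exactly as noted in the remark following Lemma~\ref{lemma:fltwosoundness}: by Lemma~\ref{lemma:one-constant} the union $\gamma = \gamma_{A_1} \cup \cdots \cup \gamma_{A_k}$ unifies the whole filo model, and Lemma~\ref{lemma:flatteningone} transfers unifiability back to the original input $\Gamma$, which is the claim. The hard part, and the only step not already packaged as a black box, is discharging the conformity hypothesis needed to invoke the first part of Lemma~\ref{lemma:implsoundness} in the shortcut case: I must check that the unifier built by the recursive shortcut construction of Lemma~\ref{lemma:shsoundness} actually \emph{conforms} to the choice $C_i$. I expect this to follow from the design of the construction — it seeds exactly the variables of $S_{ini}$, which are the CONSTANT-choice variables via the start subsumptions $X \sqsubseteq^? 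A$, with the constant $A$, while the TOP-choice variables have been removed from every flat subsumption by Rules~\ref{irules:1} and~\ref{irules:2} and so lie outside all shortcuts and receive $\top$, and the remaining variables accrue only value-restriction particles through the resolving relation and never $A$. Making this conformity argument precise (in particular ruling out a NOTHING-choice variable defaulting to $\top$), together with a clean base case for the recursion, is where the genuine care will be required.
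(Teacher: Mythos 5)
Your decomposition is exactly the paper's: compose Lemma~\ref{lemma:flatteningone}, Lemma~\ref{lemma:one-constant}, Lemma~\ref{lemma:fltwosoundness}, and the two local soundness lemmas, splitting each per-constant success into the Rule~\ref{irules:9} case (second part of Lemma~\ref{lemma:implsoundness}) and the shortcut case (Lemma~\ref{lemma:shsoundness} lifted through the first part of Lemma~\ref{lemma:implsoundness}). You are in fact more careful than the paper, whose proof cites the two lemmas ``in both cases'' without ever addressing the conformity hypothesis of Lemma~\ref{lemma:implsoundness}, part 1 --- the hypothesis you correctly isolate as the missing link.

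However, the way you propose to discharge that hypothesis --- proving that the shortcut-built unifier \emph{fully} conforms to the choice --- is a step that would fail: full conformity is simply not guaranteed, and the case you flag (a NOTHING-choice variable defaulting to $\top$) cannot be ruled out. Concretely, suppose the goal has the unsolved flat subsumption $Y \sqsubseteq^? X$ with both $X$ and $Y$ chosen NOTHING, and the start subsumption $Z \sqsubseteq^? A$ for a CONSTANT-choice variable $Z$. The initial shortcut $S_{ini} = \{Z, A\}$ satisfies $Y \sqsubseteq^? X$ vacuously, contains no decomposition variables, and so is computed already at height $0$; FILO terminates with success, and the recursive construction assigns $A$ to $Z$ and nothing --- hence $\top$ --- to $X$ and $Y$, violating conformity for both. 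The repair is not to prove conformity but to weaken what you need: inspecting the backward direction of the rules of Figure~\ref{figure:implicit} (i.e., the proof of Lemma~\ref{lemma:implsoundness}, part 1) shows that only two of the three conformity clauses are ever used. Undoing a Rule~\ref{irules:1} deletion needs ``choice TOP $\Rightarrow \gamma(X) = \top$'', and undoing a Rule~\ref{irules:4} deletion needs ``choice CONSTANT $\Rightarrow A \in \gamma(X)$''; Rules~\ref{irules:2}, \ref{irules:3} and \ref{irules:6} lift backward for free, since enlarging the left side of a subsumption only makes it easier to satisfy, so the NOTHING clause is never invoked in that direction. The shortcut construction does guarantee these two clauses --- TOP variables are purged from the goal and hence from all shortcuts, and the CONSTANT variables are precisely the members of $S_{ini}$, which receive $A$ --- so your argument closes, but only via this partial-conformity strengthening of Lemma~\ref{lemma:implsoundness}, part 1, not via the lemma as stated.
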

	
		\begin{proof}
		Let $\Gamma$ be a problem. Assume that FILO run on $\Gamma$ terminates with \emph{success}.
		\begin{enumerate}
		\item Flattening I on an input is sound by Lemma~\ref{lemma:flatteningone}.
		
		\item  If there are no constants in $\Gamma$, FILO answers that the problem is unifiable and returns $\{ X \mapsto \top \mid X \in \mathcal{C}_v \}$ as an example of a solution.
		Obviously, if a goal does not contain constants, then all particles are either of the form $\forall v.\top$ or $\forall v.X$. If all
		variables are mapped to $\top$, all concepts in the problem reduce to $\top$. Hence the problem is unified by this substitution.
		
		\item  There are constants and FILO answers that the problem is unifiable and returns $\gamma = \bigcup_{A \in \mathcal{C}_c} \gamma_A$ as an example of a solution, where each $\gamma_A$ is an example of a solution for a generic goal $\Gamma_A$ created for a particular constant $A$. 
		If $\gamma_A$ is a unifier of $\Gamma_A$, by Lemma~\ref{lemma:fltwosoundness} and Lemma~\ref{lemma:one-constant}, $\gamma$ is a unifier of $\Gamma$.
		
		Hence we consider soundness of FILO terminating a loop for a constant with success. If it does so, then $\Gamma_A$ is unifiable.
		
		There are two possibilities: FILO terminates the loop because of Rule 9 of Figure~\ref{figure:implicit} or 	
		 FILO terminates the loop with success after computing the initial shortcut for a goal. 
		In both cases by   Lemma~\ref{lemma:implsoundness} and Lemma~\ref{lemma:shsoundness}, we have that $\Gamma$ is unifiable.

		\end{enumerate} 
%
		\end{proof}
		
	\begin{theorem}(completeness)
		If a unification problem $\Gamma$ is unifiable,
		then FILO terminates with \emph{success}.
	\end{theorem}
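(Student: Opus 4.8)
The plan is to chain together the per-stage completeness lemmas in exactly the order in which FILO processes a problem, following a single unifier through each stage and arguing that the nondeterministic branch it determines is one that FILO actually explores and does not discard. Assume $\Gamma$ is unifiable, say by a unifier $\gamma$. First I would pass to the output of flattening I: by Lemma~\ref{lemma:flatteningone} the Filo model $\Gamma'$ produced while reading the input is unifiable as well, so from now on we may work with $\Gamma'$. If $\Gamma'$ contains no constants, FILO immediately maps every variable to $\top$ and returns success; since all particles then reduce to $\top$ this is indeed a unifier, so this case is settled directly. Otherwise, by Lemma~\ref{lemma:one-constant} it suffices to show that for every constant $A$ the loop of FILO for $A$ terminates with success, because the per-constant unifiers then combine, as noted after Lemma~\ref{lemma:fltwosoundness}, into a unifier of the whole problem, and $\gamma$ itself unifies the $A$-restricted problem by Lemma~\ref{lemma:one-constant}.

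Fix a constant $A$. By Lemma~\ref{lemma:fltwocompleteness} the unifier extends to a unifier $\gamma'$ of the generic goal $\Gamma_A$ that obeys the decreasing rule. I would then apply Lemma~\ref{lemma:implcompleteness} to $\gamma'$: it yields a choice $C$ to which $\gamma'$ conforms together with a goal $\Gamma$ obtained from $\Gamma_A$ that $\gamma'$ unifies. The point I would stress is that this $C$ is consistent in the precise sense FILO checks, because $\gamma'$ is a genuine unifier obeying the decreasing rule: a parent variable mapped to $\top$ forces each of its decomposition variables to $\top$, and $A \in \gamma'(X)$ holds exactly when $A \in \gamma'(X_A)$, so none of the three consistency rules is violated. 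Since FILO enumerates all consistent choices exhaustively (the binary- and ternary-choice tables are cycled lexicographically), it eventually reaches $C$; and since $\gamma'$ unifies $\Gamma$ while conforming to $C$, the critical Implicit Solver checks, Rules 5, 7 and 8, cannot fire on this choice, exactly as argued in the completeness proof of Lemma~\ref{lemma:implcompleteness}. Hence FILO builds the goal $\Gamma$ rather than rejecting $C$.

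With the goal $\Gamma$ in hand there are two sub-cases. If $\Gamma$ has no unsolved flat subsumptions, the Implicit Solver reaches Rule~9 and FILO returns success for this loop, which is correct by the second part of Lemma~\ref{lemma:implsoundness}. Otherwise the shortcut-computation stage is triggered, so by its triggering condition both the unsolved flat subsumptions and the start subsumptions are nonempty; since $\Gamma$ is unifiable by $\gamma'$, Lemma~\ref{lemma:shcompleteness} guarantees that the shortcut computation terminates with success. In either sub-case the loop for $A$ succeeds, and as this holds for every constant, FILO terminates with success.

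I expect the main obstacle to be this gluing step rather than any individual stage: one must show that the nondeterministic choice singled out by the unifier at each stage is precisely a branch that FILO enumerates and keeps. Concretely, the delicate points are (i) verifying that the choice $C$ induced by $\gamma'$ passes FILO's consistency test rather than being pruned, (ii) confirming that the failing checks of the Implicit Solver are never triggered on a choice admitting a conforming unifier, and (iii) matching the triggering precondition of the shortcut stage (nonempty flat and start subsumptions) with the hypothesis of Lemma~\ref{lemma:shcompleteness}, handling separately the degenerate configurations in which no flat subsumptions survive the Implicit Solver, which are absorbed by Rule~9. Each of these is a bookkeeping check against the definitions, but together they are what turns the collection of stage-wise completeness lemmas into a proof that the exhaustive search provably locates a successful branch.
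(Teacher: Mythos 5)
Your proposal is correct and follows essentially the same route as the paper's own proof: flattening I, the no-constant case, then per constant chaining Lemma~\ref{lemma:fltwocompleteness}, Lemma~\ref{lemma:implcompleteness}, and Lemma~\ref{lemma:shcompleteness}, splitting on whether unsolved flat subsumptions remain. Your explicit verification of the gluing points (consistency of the induced choice, non-firing of the critical checks, and the triggering precondition of the shortcut stage) is in fact more detailed than the paper's proof, which leaves these steps implicit.
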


	\begin{proof}
		Let $\Gamma$ be a unifiable problem.   We have to consider two cases.
		\begin{enumerate}
			\item There are no constants in $\Gamma$, then $\Gamma$ is unifiable and  FILO returns \emph{success} after flattening I.
			\item If there are constants, by Lemma~\ref{lemma:one-constant}, $\Gamma_A$ a generic goal obtained by flattening II for
			each constant $A$ is unifiable. Hence there is a choice for variables defined by a unifier $\gamma_A$ of $\Gamma_A$.
			FILO will detect this choice and this will be a consistent choice on base of which a goal will be defined.
			\begin{enumerate}
				\item If Implicit Solver is able to solve all flat subsumptions in $\Gamma_A$ by Lemma~\ref{lemma:implcompleteness}, FILO returns success.
				\item If $\Gamma_A$ has unsolved flat subsumptions, then FILO computes shortcuts. By Lemma~\ref{lemma:shcompleteness}, FILO will compute
			the initial shortcut and return success.
			\end{enumerate}
		\end{enumerate}
%
	\end{proof}
	
\subsection{Output}
The output is displayed in the main panel text area. If the answer to the unification problem is positive, the user can save the displayed message in a text file or a solution in
form of an ontology, in an owl-file. In the top-right sub-panel \emph{Options} the user can change the log level (before opening an input file) from INFO to FINE.
The FINE level contains very detailed messages used to develop or debug the program. The log file can be saved. Otherwise, it will be lost with the next run of FILO.

The statistics' window contains basic information about the computation. The maximal number of variables is the number of variables in a generic goal created during computation. Since several such generic goals may be created, only the maximal number is displayed. This number may indicate the difficulty of a unification problem.

The number of cases decided by pre-processing is the number of goals that were dismissed (or terminated with success) by Implicit Solver, and the number of cases decided by computing shortcuts,
is the number of goals for which FILO had to enter the the phase of computing shortcuts.

The time of computation is displayed in miliseconds.
	
Statistics are not defined if a user terminates the computation	before it is finished.
	
	\section{Examples}
	
	In the sets of examples provided by FILO (in the combo box), Example 16 is taken from \cite{BaNa-JSC01} (page 11), the first paper in which the unification in \flo was introduced and solved.

The example contains a unification problem in form of equivalence:
	$\forall r(A_1 \sqcap \forall r.A_2) \sqcap \forall rs.X_1 \equiv^? 
	\forall rs(\forall s.A_1 \sqcap \forall r.A_2) \sqcap \forall r.X_1 \sqcap \forall rr.A_2$.
	
	FILO computes the same solution that is contained in the paper and the computation takes 9310 ms.
One can see in the statistics window that during the computation FILO worked with at most 29 variables, deciding 1290 goals already in the preprocessing stage, and trying to solve the problem by computing shortcuts 8 times.

	Out of the examples provided, Example 8 is most difficult.
	It contains two constants and is not unifiable for either of them. Hence it terminates 
	with failure after checking the generic goal produced for one constant only.
	Nevertheless it takes 29360 ms to terminate. The maximal number of variables is 33 and 1525 goals are
	rejected in the preprocessing stage, while the computation with shortcuts has been triggered 12 times.
	
	\section{Implementation improvements}
	
	The fist version of FILO presented in this paper, can be improved in many ways. We mention few of them here.

	FILO currently has no means of checking whether a unification problem is ground or an instance of matching. Consequently, in such cases, it proceeds with the computation, treating the problem as if it was a full unification problem with variables. 
	Therefore, for the time being, FILO should not be used as a subsumption decider or a matching solver. 
	In the future we will equip FILO with the ability to pre-check the input problem and apply a polynomial time algorithm for these simpler tasks.
	
	The first flattening step will almost always introduce system variables. We plan to 
	perform only one flattening step, by extending flattening II procedure. We hope this will reduce the number
	of variables and thus improve runtime efficiency.
	
	FILO can also be improved with respect to the Implicit Solver cases. By careful analysis of potential input subsumptions, the computation ca be shortened. 
	
	Checking choices in the lexicographic order can be perhaps replaced by a smarter way of the evaluation. One can think about using a SAT solver for this task.
%
%
%
%
%
\bibliographystyle{plain}	
	\bibliography{FLbottom}
\end{document}